\def\BibTeX{{\rm B\kern-.05em{\sc i\kern-.025em b}\kern-.08emT\kern-.1667em\lower.7ex\hbox{E}\kern-.125emX}}
\newtheorem{observation}{Observation}
\begin{document}

\fancyhead{}

\title{Parallel Planar Subgraph Isomorphism and Vertex Connectivity}

\author{Lukas Gianinazzi}
\affiliation{%
 \department{Department of Computer Science}
  \institution{ETH Zurich}
}
\email{lukas.gianinazzi@inf.ethz.ch}

\author{Torsten Hoefler}
\affiliation{%
 \department{Department of Computer Science}
  \institution{ETH Zurich}
}
\email{htor@inf.ethz.ch}

\renewcommand{\shortauthors}{Gianinazzi and Hoefler}

\begin{abstract}
We present the first parallel fixed-parameter algorithm for subgraph isomorphism in planar graphs, bounded-genus graphs, and, more generally, all minor-closed graphs of locally bounded treewidth. Our randomized low depth algorithm has a near-\emph{linear} work dependency on the size of the target graph. Existing low depth algorithms do not guarantee that the work remains asymptotically the same for any constant-sized pattern.
By using a connection to certain \emph{separating cycles}, our subgraph isomorphism algorithm can decide the vertex connectivity of a planar graph (with high probability) in asymptotically near-linear work and poly-logarithmic depth. Previously, no sub-quadratic work and poly-logarithmic depth bound was known in planar graphs (in particular for distinguishing between four-connected and five-connected planar graphs).
\end{abstract}

\keywords{graph algorithms; parallel algorithms; subgraph isomorphism; planar graphs; vertex connectivity; parameterized complexity}

\maketitle

\section{Introduction}

Subgraph Isomorphism has applications for pattern discovery in biological networks~\cite{DBLP:journals/jcisd/ArtymiukBGPPRTWW92,Milo824,bioinformatics/bth436}, graph databases~\cite{DBLP:conf/icdm/KuramochiK01}, and electronic circuit design~\cite{DBLP:conf/dac/OhlrichEGS93}. It is also powerful subroutine to solve edge connectivity and vertex connectivity of planar graphs~\cite{DBLP:journals/algorithmica/Eppstein00}. The subgraph isomorphism problem is to look for occurrences of a \emph{pattern graph $H$} as a subgraph of a \emph{target graph $G$}. 
Subgraph isomorphism is a generalization of many $NP$-complete problems (such as finding a \emph{Maximum Clique}, \emph{Longest Path}, or \emph{Hamiltonian Cycle}~\cite{GareyNP-Complete}). The problem remains hard even in bounded degree graphs~\cite{DBLP:journals/tcs/GareyJS76} and planar graphs~\cite{DBLP:journals/ipl/Plesnik79}.

Hence, it is natural to consider \emph{parameterized} versions of the problem that are \emph{tractable} when some parameter is small. We focus our attention to the case when the pattern graph $H$ is relatively small, and give algorithms whose work grows slowly (i.e., close to linear) with the size of the target graph $G$, but is allowed to grow quickly (i.e., exponential) in terms of the size of the pattern graph $H$. This continues the development of \emph{fixed-parameter tractable} (FPT) algorithms for NP-hard problems~\cite{Downey_1999}.

We present a parallel \emph{fixed-parameter tractable} algorithm with low depth for subgraph isomorphism in \emph{planar graphs}. Planar graphs are an important class of graphs which arise naturally from problems in geometry~\cite{De_Loera_2010}, when trying to lay out electronic circuits without crossings~\cite{aggarwal1991multilayer}, and in image segmentation~\cite{DBLP:conf/cvpr/SchmidtTC09}. 

Drawing on existing FPT techniques~\cite{DBLP:conf/cvpr/SchmidtTC09, DBLP:conf/soda/Eppstein95, DBLP:journals/jacm/Baker94}, our algorithm exploits that local neighborhoods of a planar graph are well-behaved and can be efficiently decomposed. We overcome two fundamental challenges:
The first challenge is the reliance on a breadth-first-search (of unbounded depth) to construct the local neighborhoods. We avoid this issue by applying a randomized clustering~\cite{DBLP:conf/spaa/MillerPVX15} into low-diameter parts. This decomposition works because we can bound the probability that an occurrence of the pattern is not in a single cluster by a constant. 
The second challenge is the work-efficient solution of a high depth dynamic program. We transform the problem into a directed acyclic graph and exploit the properties of the parametrized subgraph isomorphism problem to show that introducing shortcuts for only a small subset of nodes suffices to reduce the depth of the graph to poly-logarithmic in the target graph's size (and linear in the pattern graph's size).

\subsection{Preliminaries}

\emph{Subgraph isomorphism} is interested in \emph{occurrences} of a \emph{graph pattern} $H$ (with $k$ vertices and diameter $d$) as a subgraph of a \emph{target graph} $G$ (with $n$ vertices). Formally, a subgraph isomorphism is an injective map $\phi$ from the vertices of $H$ to the vertices of $G$ such that if two vertices $u$ and $v$ are adjacent in $H$, then $\phi(u)$ and $\phi(v)$ are adjacent in $G$. The simplest variant of the subgraph isomorphism problem is to \emph{decide} if any occurrence of the pattern exists in the target graph, but we can also consider \emph{counting} the occurrences or \emph{listing} them.

For any graph $G'$, we denote its vertex set as $V(G')$, its edge set as $E(G')$, and the subgraph of $G'$ induced by a subset $X$ of its vertices by $G'[X]$.
A graph that is formed from the graph $G$ by contracting edges, deleting vertices, and deleting edges is a \emph{minor} of $G$. A family of graphs is \emph{minor-closed} if every minor of every graph in the family is also in the family.

\paragraph{Vertex Connectivity}
A graph with at least $c+1$ vertices is $c$-vertex-connected if removing any $c-1$ vertices does not disconnect the graph. The vertex connectivity $c$ of a graph is the largest number $c$ for which the graph is $c$-vertex-connected.

\paragraph{Tree Decomposition and Treewidth} A \emph{tree decomposition} provides a recursive subdivision of a graph into overlapping subgraphs such that each subgraph is disconnected from the rest of the graph after removing few vertices. The \emph{decomposition tree} records the recursive subdivision in a tree and labels the nodes of the tree with the vertices used to subdivide the graph (in a way that every edge occurs in \emph{at least} one of the tree nodes). See \Cref{fig:tree-decomp} for an example of how a decomposition tree represents a recursive subdivision of a graph. 
\begin{figure}[b]
	\includegraphics[width=1.0\linewidth]{./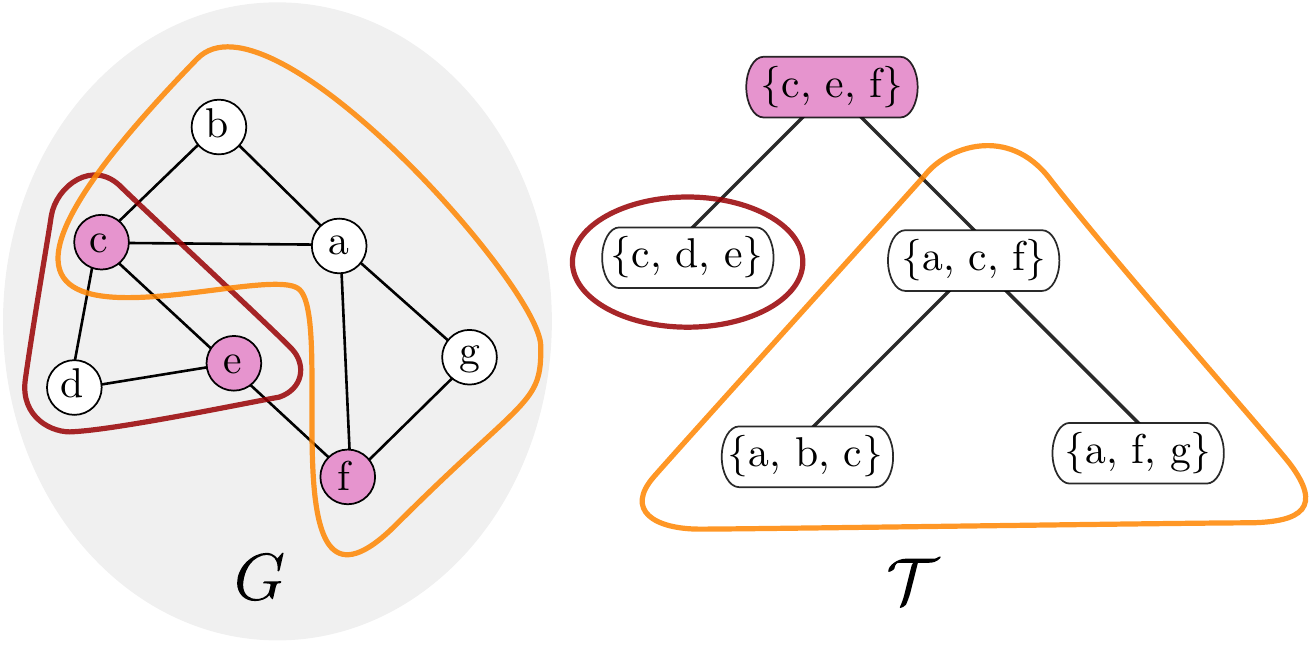}
\caption{Illustration of a graph $G$ and one of its tree decompositions $\mathcal{T}$ of width $2$. The highlighted subtrees in the tree $\mathcal{T}$ correspond to the subgraphs highlighted in the graph $G$ of the same color. The root node $\{c, e, f\}$ separates the two highlighted subtrees, meaning that every path from the subgraph induced by the left subtree to the subgraph induced by the right subtree contains a vertex that is in the root node $\{c, e, f\}$. }
\label{fig:tree-decomp}
\end{figure}

The advantage of the tree decomposition is that it gives a way to describe a divide-and-conquer approach (along some graph decomposition) as a dynamic program on the decomposition tree instead. The dynamic program maintains \emph{partial results} that correspond to the subgraphs of the current node in the decomposition tree and combines the partial results in a bottom-up fashion on this tree.

Formally, a \emph{tree decomposition}~\cite{DBLP:conf/icalp/BodlaenderH95,DBLP:conf/soda/FominLPSW17,DBLP:conf/focs/BodlaenderDDFLP13,DBLP:journals/jal/BodlaenderK96, DBLP:conf/stoc/Bodlaender93, DBLP:journals/jal/Lagergren96} of a graph $G$ consists of a nonempty \emph{decomposition tree} $\mathcal{T}$ where each node $X_i$ of the tree $\mathcal{T}$ is a subset of the vertices $X_i \subseteq V$ of $G$, such that:

\begin{itemize}
	\item Every vertex $u$ of $G$ is contained in a contiguous nonempty subtree of the decomposition tree $\mathcal{T}$.
	\item For every edge $(u,v)$ of the graph $G$, there is a node $X_i$ of the tree $\mathcal{T}$ where both endpoints $u$ and $v$ are in the node $X_i$.
\end{itemize}
The maximum of $|X_i|-1$ over all nodes $X_i$ of the tree $\mathcal{T}$ is the \emph{width} of the tree decomposition. The smallest width of any tree decomposition of $G$ is the \emph{treewidth} $\tau$ of $G$.

We can assume for simplicity that every interior node in the decomposition tree has exactly two children, as we can split high-degree nodes and add empty leaf nodes without changing the width of the decomposition. Moreover, a minimum width tree decomposition of a graph with $m$ edges has $O(m)$ nodes.

\paragraph{Model of Computation} We consider a synchronous \emph{shared memory} parallel machine with concurrent reads and exclusive writes (CREW PRAM). We express our bounds in terms of the total number of operations performed by any execution of the algorithm by all processors (called \emph{work}) and the length of the critical path in the computation (called \emph{depth})~\cite{Blelloch:1996:PPA:227234.227246}. By Brent's scheduling algorithm~\cite{Blelloch:1996:PPA:227234.227246, Reif:1993:SPA:562546}, an algorithm with work $W$ and depth $D$ can be executed with $P$ processors in time $O(W/P+D)$ on a CREW PRAM. 

\paragraph{Randomization} Numerous efficient parallel algorithms make use of some form of randomness~\cite{DBLP:conf/spaa/GeissmannG18}. For some graph problems (such as minimum cuts~\cite{DBLP:conf/spaa/GeissmannG18} and minimum spanning trees~\cite{DBLP:conf/spaa/ColeKT96}), a randomized algorithm has the lowest known bounds. 

We assume each processor has access to an independent and uniformly distributed random word in each time step. If an event occurs with probability at least $1-n^{-a}$ for all constants $a > 1$, we say it occurs \emph{with high probability } (w.h.p.). An algorithm that returns the correct result with high probability is \emph{Monte Carlo}. 

\subsection{Related Work}

For the general case of subgraph isomorphism, no algorithm with less work than the naive $n^k$ is known. Ullmann presents an algorithm that uses a backtracking search~\cite{DBLP:journals/jacm/Ullmann76}.

Tree patterns of bounded size can be found efficiently in general graphs~\cite{DBLP:journals/jacm/AlonYZ95}. Much attention has been put on subgraph isomorphism in special \emph{families of target graphs}, which require some form of sparsity and additional structure~\cite{DBLP:conf/isaac/EppsteinLS10, DBLP:conf/soda/Eppstein95, DBLP:journals/jacm/AlonYZ95, DBLP:journals/siamcomp/ChibaN85}. 

\paragraph{Parameterized Complexity}

The idea behind \emph{parameterized complexity}~\cite{Downey_1999} is to identify (one or more) fundamental parameters $p$ of an NP-hard problem that characterize the difficult part of the problem. Then, a \emph{fixed-parameter tractable algorithm}'s runtime separable into $f(p)g(n)$ (or $f(p)+g(n)$) where $f$ is allowed to be any function of $p$ and $g$ has to be polynomial in $n$~\cite{Downey_1999}.

FPT Algorithms with low depth exist for several NP-complete problems~\cite{DBLP:conf/wg/Bodlaender88,DBLP:conf/wg/ChenH95,DBLP:conf/iwpec/BannachST15}. Refer to \Cref{tab:bounds} (excl. row $1$) for an overview of the FPT algorithms for subgraph isomorphism in planar graphs.

\paragraph{Color Coding}

Using a Monte Carlo technique called \emph{Color Coding}, Alon et al.~\cite{DBLP:journals/jacm/AlonYZ95} obtain $O(e^{k}n^{\tau+1}\log n)$ work on a pattern of treewidth $\tau$, which implies $e^{k}n^{\Theta(\sqrt{k})}\log n$ work for a planar pattern (as the treewidth of a planar graph with $k$ vertices is $\Theta(\sqrt{k})$~\cite{doi:10.1137/0136016,DBLP:journals/jct/RobertsonS86}). The algorithm's depth is poly-logarithmic in $n$ and polynomial in $k$. Their key idea is to color the vertices in the target graph with $k$ random colors, which allows a dynamic programming approach that needs to keep an exponentially smaller state. Note that this algorithm is \emph{not} FPT for the size $k$ of the pattern (nor the treewidth $\tau$), because its runtime grows with $n^{\sqrt{k}}$ (or $n^{\tau+1}$).

\paragraph{Locally Bounded Treewidth}

Eppstein presents the first FPT subgraph isomorphism algorithm for planar graphs that has a \emph{linear} dependency on the size of the pattern graph~\cite{DBLP:conf/soda/Eppstein95}. It runs in polynomial time in $n$ for patterns of size $O(\log n / \log \log n)$. The key insight is to exploit that local neighborhoods of planar graphs have bounded treewidth. 
Their algorithm generalizes to other minor-closed families with a relationship between diameter and treewidth, such as bounded-genus graphs~\cite{DBLP:journals/algorithmica/Eppstein00}. They use a breadth-first-search (BFS) to decompose the graph into these local neighborhoods.

\paragraph{Sampling}
Fomin et al.~\cite{DBLP:conf/focs/FominLMPPS16} present a randomized sampling approach that produces subgraphs of sub-linear treewidth in $k$. Then, they apply an existing FPT dynamic program.

\subsection{Our Contributions}

\renewcommand{\arraystretch}{1.3}
\begin{table}
	\caption{Bounds for deciding planar subgraph isomorphism. ($^{\star}$) The algorithm is Monte Carlo, and its bounds hold w.h.p..}
	\label{tab:bounds}
	\centering
	\small
	\tabcolsep=3mm
	\begin{tabular}{lccc} %
		\toprule
		& Work & Depth \\
		\midrule
		Alon et al.$^{\star}$ \cite{DBLP:journals/jacm/AlonYZ95}  & $e^{k} \ n^{\Theta(\sqrt{k})} \log n$ & $\Theta(k\log n)$ \\
		Eppstein ~\cite{DBLP:conf/soda/Eppstein95} & $O( {2}^{3k \log_2 ( {3 k+1})} n )$  & $\Theta( k n )$ \\
		Dorn ~\cite{DBLP:conf/stacs/Dorn10} & $O(2^{18.81k}n)$ &  $O(2^{18.81k}n)$ \\
		Fomin et al. $^{\star}$ ~\cite{DBLP:conf/focs/FominLMPPS16} & $2^{O(k/\log k)} n^{O(1)}$ & $2^{O(k/\log k)} n^{O(1)}$ \\
		\emph{\textbf{This Paper}}  $^{\star}$ & $O( {2}^{3k \log_2 ({3 k+1})} \ n \log n)$ & $O(k \log^2 n)$  \\
	\end{tabular}
\end{table}

We present the first FPT work planar subgraph isomorphism algorithm with depth poly-logarithmic in $n$ and polynomial in $k$. Our Monte Carlo algorithm has $k^{O(k)} n \log n$ work and $O(k\log^2 n)$ depth in planar graphs and has FPT work in all minor-closed families of graphs of locally bounded treewidth (see \Cref{sec:apex-free}).

\Cref{tab:bounds} contains the exact bounds and a comparison to the related works regarding planar graphs. Note that if the pattern graph occurs in the target graph, the expected work is $k^{O(k)} n$. Our algorithm can also \emph{list} all $x$ occurrences of a pattern with  $O(xk \ (\log n + \log x)) + k^{O(k)} n \ (\log n + \log x) \  $ work and $O(k \log ^2 n \ (\log n + \log x))$ depth.

We use a \emph{low-diameter decomposition}, which can ensure that the occurrences of the pattern graph are in the same low-diameter part of the graph with sufficient probability. Then, we show how to exploit the special structure of a tree decomposition based algorithm to compute its results work-efficiently in parallel. Finally, we provide a randomized extension to the algorithm that also handles disconnected pattern graphs.

More generally, we can find isomorphic subgraphs that separate a set of marked vertices (leaving them in different components after removal of the subgraph).
Because there is a relation between finding certain separating cycles as subgraphs and planar vertex connectivity, our subgraph isomorphism algorithm yields better parallel bounds for deciding vertex connectivity in planar graphs.

We show that planar vertex connectivity can be answered in $O(n \log n)$ work and $O(\log ^ 2 n)$ depth. Previously, only $2$-connectivity and $3$-connectivity had sub-quadratic work and poly-logarithmic depth solutions~\cite{DBLP:journals/siamcomp/TarjanV85, DBLP:conf/stoc/MillerR87}.

\section{From Planar To Low Treewidth}

Planar graphs do not have bounded treewidth (it can be up to $\sqrt{n}$), which prevents a direct application of bounded treewidth techniques (as we use in \Cref{sec:bounded-treewidth-algo}). Fortunately, a planar graph of diameter $d$ has treewidth at most $3d$~\cite{DBLP:conf/soda/Eppstein95}, and each occurrence of a pattern with diameter $d$ is contained in a subgraph of diameter $d$ of the target graph. 

Hence, a simple (but work-inefficient) approach to solve subgraph isomorphism in planar graphs would consist of building for every vertex in the target graph the subgraph induced by nodes at a distance at most $d$, and then invoking an algorithm for bounded treewidth graphs on each of those subgraphs. This approach of \emph{covering} the graph is inefficient because many vertices of the target graph could be in multiple (even all) of these subgraphs, leading to a total size of these subgraphs of $\Theta(n^2)$.

Instead, Eppstein~\cite{DBLP:conf/soda/Eppstein95} proposed (based on an idea by Baker~\cite{DBLP:journals/jacm/Baker94}) a covering approach based on a single BFS to cover all subgraphs of diameter at most $d$ with graphs of total size only $O(dn)$. It is easy to see that naive BFS takes linear work and $O(D)$ depth on a diameter $D$ graph, but we care exactly about the situation when the diameter $D$ is not bounded. Even on planar graphs, performing work-efficient and low-depth BFS is a challenging problem. An approach by Klein~\cite{DBLP:conf/focs/KleinS93} achieves $O(n \log^9 n)$ work and poly-logarithmic depth.

\begin{figure}
\vspace{1em}
	\includegraphics[width=1.0\linewidth]{./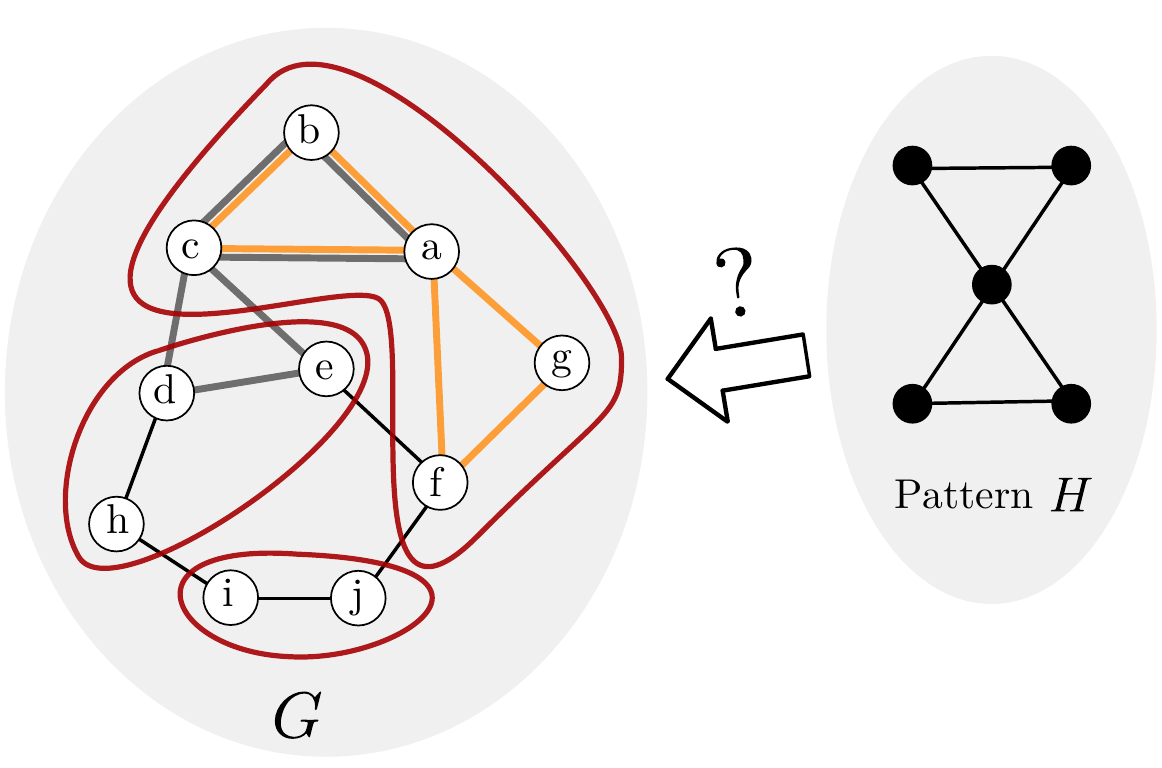}
\caption{A randomized procedure splits the target graph $G$ into clusters. Each occurrence of the pattern $H$ is contained inside a single cluster with constant probability. In the example, the occurrence of the pattern $H$ with vertices $f, g, a, b, c$ is contained in a single cluster, but the occurrence $d, e, a, b, c$ crosses the clusters. Hence, the former is found with this clustering, but the latter is not.}

\label[figure]{fig:clustering}
\end{figure}

To avoid the issue of low-depth BFS, in our approach, we first decompose the graph into \emph{randomized} clusters of small diameter (as illustrated in \Cref{fig:clustering} and \Cref{fig:k-cover}). This allows us to then run a simple parallel BFS on those low diameter graphs and construct a covering for each of those clusters. In summary, one run of our subgraph isomorphism algorithm works as follows:
\begin{enumerate}
	\item Cover the target graph with subgraphs $G_0, \dotsc, G_i$ of bounded treewidth (they might overlap, as detailed in \Cref{sec:low-treewidth-cover}).
	\item Solve subgraph isomorphism for each such bounded treewidth subgraph in parallel (as described in \Cref{sec:bounded-treewidth-algo}).
\end{enumerate}
Since our covering algorithm is randomized, an occurrence of a pattern may not be contained in any single subgraph in the cover. However, in expectation $O(1)$ repetitions suffice to find an occurrence of the pattern if it exists. At most $O(\log n)$ runs suffice to certify that no occurrence of a pattern exists with high probability.
Our main result for planar graphs is the following:

\begin{theorem}
Deciding (with high probability) if a connected pattern graph $H$ occurs as a subgraph of a planar target graph $G$ takes $O((3k)^{3k+1} n \log n)$ work and $O(k \log ^2 n)$ depth.
\end{theorem}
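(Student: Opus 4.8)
\emph{Proof plan.} The plan is to combine three ingredients and then amplify a constant success probability: (i) a randomized low-diameter decomposition that, with constant probability, leaves any single occurrence of $H$ inside one cluster of small diameter; (ii) a layered covering of each cluster (following Eppstein~\cite{DBLP:conf/soda/Eppstein95} and Baker~\cite{DBLP:journals/jacm/Baker94}) into bounded-treewidth subgraphs of small total size; and (iii) a work-efficient, low-depth evaluation of the bounded-treewidth subgraph-isomorphism dynamic program. Repeating $\Theta(\log n)$ times boosts the success probability to high probability.

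First I would run the randomized clustering of Miller et al.~\cite{DBLP:conf/spaa/MillerPVX15} with decomposition parameter $\beta = \Theta(1/k)$. On a planar graph this costs $O(n)$ work and poly-logarithmic depth, and produces clusters of strong diameter $O(\beta^{-1} \log n) = O(k \log n)$ in which every edge is separated with probability $O(\beta)$. A fixed occurrence of $H$ is a connected subgraph on $k$ vertices, hence has $O(k)$ edges; by a union bound it fails to lie in a single cluster with probability at most $O(k\beta)$, which the constant in $\beta$ makes a constant strictly below one. Thus a fixed occurrence survives into one cluster with constant probability.

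Next, within each cluster I would run a parallel breadth-first search from an arbitrary center. This is precisely the step that is troublesome in general, but here it is cheap because the cluster diameter is only $O(k\log n)$: the search finishes in $O(k\log n)$ rounds of $O(\log n)$ depth each, i.e.\ $O(k\log^2 n)$ depth and work linear in the cluster size. Splitting the cluster into breadth-first layers $L_0, L_1, \dots$ and forming the windows $G_i = G[L_i \cup \dots \cup L_{i+d}]$ of $d+1$ consecutive layers realizes the covering: since the diameter $d \le k$ of the pattern bounds the layer span of any occurrence, every occurrence that lies in the cluster lies inside some $G_i$; each $G_i$ is planar of diameter $O(d)$ and hence of treewidth at most $3d \le 3k$; and because each vertex lies in at most $d+1$ windows, the windows have total size $O(dn)$ over all clusters.

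It remains to decide, for each window $G_i$, whether $H$ occurs. I would compute a width-$(3k)$ tree decomposition of $G_i$ and run the standard bottom-up dynamic program whose state at a bag is a map from the at most $3k+1$ bag vertices to the $k$ pattern vertices or to ``unused'', recording which partial embeddings extend to the subtree below; there are at most $(k+1)^{3k+1} \le (3k+1)^{3k}$ such states, an occurrence is witnessed exactly when some window realizes a complete embedding, and no spurious occurrence is ever reported. Multiplying by the cover size $O(dn) = O(kn)$ and using $(1+1/(3k))^{3k} < 3$ gives work $(3k+1)^{3k}\cdot O(kn) = O((3k)^{3k+1} n)$ per repetition. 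The hard part is the \emph{depth}: a tree decomposition may be extremely unbalanced, so the textbook bottom-up evaluation has depth proportional to its height and is not poly-logarithmic. I would overcome this by recasting the dependencies of the dynamic program as a directed acyclic graph and inserting shortcut edges, exploiting that a diameter-$d$ pattern can only influence a bounded stretch of the tree so that shortcutting a small set of nodes collapses the longest path to $O(\log n)$; since each bag transition is a parallel reduction over a table of $(3k+1)^{3k}$ entries (depth $O(k\log n)$), the dynamic program then runs in $O(k\log^2 n)$ depth. Finally, the $\Theta(\log n)$ repetitions are independent and run in parallel, so they leave the depth at $O(k\log^2 n)$ and multiply the work by $O(\log n)$, yielding the claimed $O((3k)^{3k+1} n \log n)$ work and $O(k\log^2 n)$ depth; the one-sided error over $\Theta(\log n)$ trials is $n^{-a}$ for any constant $a$.
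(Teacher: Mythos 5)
Your overall route is the same as the paper's: exponential start-time clustering with separation parameter $\Theta(1/k)$ (your union bound over all edges of the occurrence is fine because an occurrence inside a planar graph has at most $3k-6$ edges; the paper instead bounds over a spanning tree), Baker/Eppstein windows of $d+1$ consecutive BFS levels inside each low-diameter cluster, a bounded-treewidth dynamic program per window, and $O(\log n)$ independent repetitions for high probability. The final bounds match. However, the step that carries the whole theorem --- evaluating the possibly very unbalanced tree-decomposition dynamic program in $O(k\log^2 n)$ depth \emph{without} paying a $\log n$ factor in work --- is asserted with a justification that does not work. You write that ``a diameter-$d$ pattern can only influence a bounded stretch of the tree.'' That is false: an occurrence matched deep inside the decomposition tree must have its validity propagated all the way up to the root, i.e., across an arbitrarily long stretch of the tree. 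The property the paper actually exploits is different: along any path in the DAG of partial matches, at most $k$ edges introduce a newly matched vertex, and every other edge extends a partial match in the \emph{unique} trivial way (no new matches), so these edges form a directed forest. The decomposition tree is first split into paths grouped in $O(\log n)$ layers by tree contraction (\Cref{lem:bough-decomposition}); within each path's DAG, shortcuts are added only inside the no-new-match forest, and only at every $\log n$-th vertex, with exponentially increasing spans. This is precisely what keeps the work linear per window while reducing the hop distance to $O(k\log n)$ per layer, hence $O(k\log^2 n)$ depth overall. ``Shortcutting a small set of nodes'' without identifying this forest structure is not yet an algorithm, and the obvious alternative of shortcutting every vertex loses a $\Theta(\log n)$ work factor, which would make the total work $O((3k)^{3k+1} n \log^2 n)$ rather than the claimed $O((3k)^{3k+1} n \log n)$.

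A second, smaller gap: you say ``compute a width-$(3k)$ tree decomposition of $G_i$'' without saying how within the stated resources. A general-purpose parallel treewidth algorithm such as Lagergren's~\cite{DBLP:journals/jal/Lagergren96} has $\tau^{O(\tau)}\log^3 n$ depth and would spoil the $O(k\log^2 n)$ bound; the paper relies on the planar-specific construction of Eppstein and Baker~\cite{DBLP:conf/soda/Eppstein95, DBLP:journals/jacm/Baker94}, which yields a width-$3d$ decomposition of each window in $O(n)$ work and $O(d)$ depth once a planar embedding is available, the embedding itself being computable in $O(n)$ work and $O(\log^2 n)$ depth~\cite{DBLP:conf/focs/KleinR86}. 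Your state-space accounting for the dynamic program is also looser than the paper's (you count states per bag, whereas the cost is dominated by enumerating compatible parent/left-child/right-child triples), but that affects only constants in the exponent, not the shape of the bound.
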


\noindent
For a pattern of small diameter $d$, we obtain better bounds:
\begin{corollary}
Deciding (with high probability) if a connected pattern graph $H$ of diameter $d$ occurs as a subgraph of a planar graph $G$ takes $O((3d+3)^{3k+1} n \log n)$ work and $O(k \log ^2 n)$ depth.
\end{corollary}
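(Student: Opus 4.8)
The plan is to obtain the corollary as a sharpening of the preceding theorem rather than as an independent argument: we re-run the very same two-stage algorithm (cover the target by bounded-treewidth subgraphs, then apply the bounded-treewidth dynamic program to each), but we track the pattern's diameter $d$ explicitly instead of bounding it crudely by $k-1$. The observation driving the corollary is that in the theorem the value $k$ is used in two logically distinct roles, and only one of them actually needs the pattern's \emph{size}. The exponent $3k+1$ and the depth $O(k\log^2 n)$ genuinely depend on $k$ and are inherited verbatim, whereas the \emph{base} of the work bound depends only on the treewidth of the covering subgraphs, which in turn is controlled by the pattern's \emph{diameter}. So the whole task reduces to replacing the treewidth bound $3k$ used in the theorem by a bound in terms of $d$.

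First I would pin down where the diameter enters. Fix an occurrence $\phi$ of $H$. For every edge $(u,v)$ of $H$ the images $\phi(u),\phi(v)$ are adjacent in $G$, so along any $H$-path of length at most $d$ the $G$-distance grows by at most one per step; hence the image of $H$ is a \emph{connected} subgraph of $G$ whose internal diameter is at most $d$. After a BFS in the cluster assigns each vertex its layer (its distance from the BFS source), adjacent vertices differ in layer by at most one, so taking the lowest- and highest-layer vertices of the occurrence and the (length $\le d$) path between them shows that the occurrence occupies at most $d+1$ \emph{consecutive} layers. This is the crucial sharpening: the span is $d+1$, not the naive $2d+1$ one would get from a radius-$d$ ball.

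Then I would feed this into the covering step (\Cref{sec:low-treewidth-cover}): it suffices to cover each low-diameter cluster by subgraphs that each span at most $d+1$ consecutive BFS layers, and by the planar layering/treewidth bound used in the theorem~\cite{DBLP:conf/soda/Eppstein95} every such subgraph has treewidth at most $3(d+1)=3d+3$. Running the bounded-treewidth solver (\Cref{sec:bounded-treewidth-algo}) on treewidth $3d+3$, whose per-size work has base equal to the treewidth bound and exponent $3k+1$, gives work $O\!\big((3d+3)^{3k+1}\big)$ per unit of covered size; multiplying by the $O(n)$ total size of the cover and by the $O(\log n)$ repetitions needed for the high-probability guarantee yields work $O\!\big((3d+3)^{3k+1}\,n\log n\big)$, while the depth stays $O(k\log^2 n)$ because the critical path is determined by $k$ and the balanced decomposition, not by the base of the work bound. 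As a consistency check, substituting $d=k-1$ (the worst case for a connected $k$-vertex pattern, so that $d+1\le k$) recovers exactly the theorem's base $3k$ and bound $(3k)^{3k+1}$.

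The main obstacle I anticipate is the treewidth accounting in the covering step: one must arrange the slabs so that every diameter-$d$ occurrence is contained in a \emph{single} subgraph of span $d+1$ (which is exactly what the layer argument above buys us, turning $2d+1$ into $d+1$) and simultaneously certify treewidth at most $3d+3$ for each such subgraph, and both of these must be achieved inside the randomized low-diameter decomposition without inflating the depth beyond $O(k\log^2 n)$ or the cover beyond linear total size. Once the theorem's machinery is in place, the rest is routine, since the corollary only substitutes the sharper diameter-driven treewidth bound $3d+3$ for $3k$ in an otherwise unchanged analysis.
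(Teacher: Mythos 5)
Your overall route coincides with the paper's: build the treewidth cover of \Cref{sec:low-treewidth-cover}, run the bounded-treewidth solver of \Cref{sec:bounded-treewidth-algo} on each piece, and repeat $O(\log n)$ times for the high-probability guarantee; your layer argument (a diameter-$d$ occurrence spans at most $d+1$ consecutive BFS levels of its cluster, hence lies in some $G_i$) is exactly the covering theorem's proof. The problem is how you arrive at the base $3d+3$: it rests on two compensating misstatements. First, you quote the cover's treewidth bound as $3(d+1)=3d+3$, whereas the covering theorem asserts treewidth at most $3d$ for the subgraphs induced by levels $i$ through $i+d$. Second, and more seriously, you assert that the bounded-treewidth solver's ``per-size work has base equal to the treewidth bound''; the lemma in \Cref{sec:bounded-treewidth-algo} gives work $O((\tau+3)^{3k+1}n)$ with base $\tau+3$, not $\tau$. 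The $+3$ is intrinsic to the dynamic program: each pattern vertex has $\tau+1$ possible images inside a decomposition node plus the two special states \emph{matched in a child} and \emph{unmatched}, so there are $(\tau+3)^{k}$ partial matches per node, and this cannot be replaced by $\tau^{k}$. If one runs your argument with the lemma as actually proved, your treewidth bound $3d+3$ yields $O((3d+6)^{3k+1}n\log n)$, and the discrepancy $\bigl((3d+6)/(3d+3)\bigr)^{3k+1}=(1+1/(d+1))^{3k+1}=2^{\Theta(k/d)}$ is not a constant factor, so the stated bound is not established. The corollary's base is obtained the other way around: treewidth $\tau\le 3d$ from the cover, then base $\tau+3=3d+3$ from the lemma. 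Your consistency check at $d=k-1$ ``works'' only because the same two errors cancel there as well.

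A smaller point: you multiply by ``the $O(n)$ total size of the cover,'' but the covering theorem only guarantees that each vertex lies in at most $d$ of the graphs $G_i$, so the total size of the cover is $O(nd)$. The paper itself elides this factor in its headline bounds, so this does not separate your write-up from the paper's, but the claim as you state it is not what the cover provides.
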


To simplify the exposition, we assume (for now) that the pattern graph is connected and focus on the decision version of the problem. We then discuss how to remove the assumption of connectedness in \Cref{sec:disconnected} and show how to modify the algorithm to list all occurrences of a pattern graph in \Cref{sec:listing-si}. Moreover, we generalize the approach from planar graphs to a class of graphs that contains all bounded-genus graphs in \Cref{sec:apex-free}.

\subsection{Parallel Low-Treewidth Cover}\label{sec:low-treewidth-cover}

We show how to construct (in parallel) a set of subgraphs of low treewidth such that each occurrence of a connected pattern $H$ is in at least one of the subgraphs with constant probability. The first step is to use a \emph{low-diameter decomposition}. 
The goal of a low-diameter decomposition is to partition the vertices of the graph into (vertex-disjoint) clusters of low diameter such that few edges of the graph connect vertices in different clusters.

\emph{Exponential Start Time Clustering}~\cite{DBLP:conf/spaa/MillerPVX15} is especially well-suited for our purposes because it bounds the \emph{probability} that an edge connects two different clusters. This observation allows us to bound the probability that a connected subgraph is split into multiple clusters, and thus the clustering preserves the occurrences of a graph pattern with nontrivial probability, as needed for our purposes. 

A \emph{clustering} of $G$ is a set of vertex-disjoint induced subgraphs called \emph{clusters} that together contain all vertices. We say an edge \emph{crosses the clusters} if it has endpoints in the vertex sets of two distinct clusters.

\begin{lemma}[Exponential Start Time Clustering~\cite{DBLP:conf/spaa/MillerPVX15}]\label[lemma]{lemma:exp-start}
With $O(n)$ work and $O(\beta \log n)$ depth, Exponential Start Time $\beta$-Clustering produces, w.h.p., clusters of diameter $O(\beta \log n)$ where each edge crosses the clusters with probability at most $1/\beta$. 
\end{lemma}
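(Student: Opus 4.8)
The plan is to reconstruct the argument of Miller et al.~\cite{DBLP:conf/spaa/MillerPVX15}. The algorithm draws, independently for each vertex $v$, a random \emph{start time} (shift) $\delta_v$ from an exponential distribution with mean $\beta$, and then assigns each vertex $u$ to the center $v$ that minimizes the \emph{shifted distance} $d(v,u)-\delta_v$ (breaking ties by vertex id). Operationally this is a single multi-source shortest-path computation in which each source $v$ is activated at time $-\delta_v$ and a wavefront spreads outward at unit speed; the cluster of $v$ is the set of vertices first reached by $v$'s wavefront. The proof then splits into three independent claims: the work/depth bound, the diameter bound, and the per-edge cutting probability.

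First I would handle the work, depth, and diameter together, since they all follow from a tail bound on the shifts. Since $\Prob{\delta_v > t} = e^{-t/\beta}$, a union bound over the $n$ vertices shows $\max_v \delta_v = O(\beta\log n)$ w.h.p. Consequently the wavefront computation need only run for $O(\beta\log n)$ synchronous rounds; on a sparse (planar) graph each round touches each edge a constant number of times, yielding $O(n)$ total work and $O(\beta\log n)$ depth for a BFS-style propagation. For the diameter, note that if $u$ lands in the cluster of $v$, then comparing against the candidate center $u$ itself gives $d(v,u)-\delta_v \le d(u,u)-\delta_u = -\delta_u \le 0$, hence $d(v,u)\le\delta_v = O(\beta\log n)$ w.h.p. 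One then checks that the shortest path from $u$ to its center stays inside the cluster (every vertex on it is also won by $v$), so each cluster is connected with radius, and therefore diameter, $O(\beta\log n)$.

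The crux is the edge-cut bound. Fix an edge $(u,w)$; for every candidate center $x$ the shifted distances obey the Lipschitz relation $\lvert(d(x,u)-\delta_x)-(d(x,w)-\delta_x)\rvert = \lvert d(x,u)-d(x,w)\rvert \le 1$. I would expose the shifts in increasing order of arrival time (at $u$, say) and invoke the \emph{memorylessness} of the exponential distribution: conditioned on the first arrival, the residual gap before the second-arriving center is again exponential with mean $\beta$. Because arrival times move by at most the unit edge weight in passing from $u$ to $w$, the two endpoints can be split into different clusters only when this gap is at most $1$, which gives $\Prob{(u,w)\text{ is cut}} \le 1-e^{-1/\beta}\le 1/\beta$.

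The main obstacle is precisely this last step: the identity of the winning center is a global function of all $n$ shifts, so one cannot naively treat the smallest two arrival times as independent order statistics. The care lies in choosing a reveal order for which memorylessness applies to the residual shifts, in matching the admissible gap exactly to the edge weight via the Lipschitz property, and in confirming the converse — that a gap exceeding the edge weight forces both endpoints to select the same center. The diameter and scheduling bounds, by contrast, reduce to routine exponential tail and round-counting estimates.
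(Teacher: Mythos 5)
First, a structural note: this paper never proves the lemma itself --- it is imported as a black box from Miller et al.~\cite{DBLP:conf/spaa/MillerPVX15} (the citation is in the lemma header), so your reconstruction can only be judged against that source. Your overall architecture is indeed theirs: shifted distances $a_x(u) = d(x,u)-\delta_x$ with exponential shifts, a tail bound $\max_x \delta_x = O(\beta\log n)$ w.h.p.\ for the depth and the (strong) diameter, and a memorylessness/order-statistics argument for the cut probability. The diameter part of your sketch is sound.

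The crux step, however, is wrong as you state it. You claim the edge $(u,w)$ can be cut only if the gap between the smallest and second-smallest arrival time \emph{at the endpoint $u$} is at most $1$. That implication fails: if $x$ wins $u$ and $y\neq x$ wins $w$, the Lipschitz property must be applied twice (once to carry $y$'s arrival from $w$ back to $u$, once to carry $x$'s arrival from $u$ over to $w$), and all you can conclude is $a_y(u)\le a_x(u)+2$, i.e.\ a gap of at most $2$. A concrete bad configuration on the path $y\,\text{--}\,w\,\text{--}\,u\,\text{--}\,x$ with $\delta_x=2.2$, $\delta_y=1.5$, $\delta_u=\delta_w=0.1$ has $x$ winning $u$, $y$ winning $w$, and a gap of $1.1>1$ at $u$. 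Run correctly, your endpoint argument therefore yields only $1-e^{-2/\beta}\le 2/\beta$, a factor $2$ weaker than the lemma states. To recover the constant $1$, Miller et al.\ measure arrivals at the \emph{midpoint} $m$ of the edge, with $d(x,m)=\min(d(x,u),d(x,w))+\tfrac12$: if the top-two gap at $m$ exceeds the edge weight $1$, the first wavefront at $m$ wins \emph{both} endpoints, and the memorylessness lemma applied at $m$ gives exactly $1-e^{-1/\beta}\le 1/\beta$. (The factor $2$ would be harmless downstream, since \Cref{obs:clustering} only needs cut probability $O(1/k)$ --- one would just cluster with parameter $4k$ --- but it does not prove the lemma as stated.) A second, more minor slip: ``each round touches each edge a constant number of times'' over $O(\beta\log n)$ rounds gives $O(n\beta\log n)$ work, not $O(n)$; the correct accounting is that each vertex is settled exactly once and each edge is relaxed $O(1)$ times over the \emph{entire} execution.
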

Note that Exponential Start-Time Clustering does not allow us to fix the number of clusters a priori. Instead, the number of clusters depends on the structure of the graph. For example, a clique will most likely end up as a single low-diameter cluster.

Because every edge crosses the clusters with small probability, the probability that a fixed occurrence of the pattern contains an edge that crosses the clusters is also relatively small (for an appropriate choice of parameter $\beta$). See \Cref{fig:clustering} for an illustration.

\begin{observation}\label{obs:clustering}
	The probability that no edge of a connected subgraph $H$ of the graph $G$ crosses a cluster of an Exponential Start Time $2k$-Clustering of $G$ is at least $1/2$
\end{observation}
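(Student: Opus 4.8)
The plan is to reduce the event to a spanning tree of the occurrence and then apply a union bound. First I would fix an occurrence of the connected pattern $H$ and choose a spanning tree $T$ of it; since $H$ is connected on $k$ vertices, $T$ has exactly $k-1$ edges. The key structural observation is that the clusters are \emph{vertex-disjoint induced} subgraphs, so if none of the $k-1$ tree edges crosses the clustering, then all $k$ vertices of $H$ lie in one common cluster. Because a cluster is induced, the entire occurrence---including \emph{every} edge of $H$, not just the tree edges---then lies inside that single cluster, and hence no edge of $H$ crosses.

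With that reduction in hand, I would bound the failure probability by a union bound over only the $k-1$ tree edges. Applying \Cref{lemma:exp-start} with $\beta = 2k$, each edge crosses the clusters with probability at most $1/(2k)$, so
\[
\Prob{\text{some edge of } T \text{ crosses the clusters}} \;\le\; (k-1)\cdot \frac{1}{2k} \;<\; \frac{1}{2}.
\]
Taking complements and using the implication established above, no edge of $H$ crosses with probability at least $1/2$, which is the claim.

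The one place requiring care is the choice of which edges to union-bound over, and this is the main (if modest) obstacle. Union-bounding over \emph{all} edges of $H$ would be wasteful: a dense occurrence can have up to $\binom{k}{2}$ edges, for which the bound $\binom{k}{2}/(2k) = (k-1)/4$ grows linearly in $k$ and is therefore useless. The trick is to restrict attention to a spanning tree, exploiting that the clusters are induced so that keeping the $k$ vertices together automatically keeps all remaining edges of the occurrence inside the same cluster. This caps the union bound at $k-1$ terms and yields exactly the constant $1/2$.
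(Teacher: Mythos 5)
Your proof is correct and follows essentially the same route as the paper's: fix a spanning tree of the occurrence, union-bound the crossing probability over its $k-1$ edges using the $1/(2k)$ per-edge bound from \Cref{lemma:exp-start}, and conclude. Your only addition is spelling out why an intact spanning tree implies the whole occurrence lies in one cluster (clusters are vertex-disjoint induced subgraphs), a step the paper leaves implicit with ``which implies the result.''
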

\begin{proof}
The idea is that some spanning tree of the occurrence remains intact (i.e., no edge in the tree crosses a cluster) with the given probability, which implies the result.
	Consider an arbitrary spanning tree $A$ of $H$. By \Cref{lemma:exp-start}, the probability that a particular edge of the spanning tree crosses the clusters is at most $\frac{1}{2k}$. By the union bound, the probability that any of the $k-1$ edges of the spanning tree $A$ crosses the clusters is at most $\frac{k-1}{2k} < \frac{1}{2}$. Hence, the probability that no edge crosses the clusters is at least $1/2$.
\end{proof}

\noindent
We combine the clustering idea with the approach from Eppstein~\cite{DBLP:conf/soda/Eppstein95} and Baker~\cite{DBLP:journals/jacm/Baker94} for the \emph{Parallel treewidth $k$-$d$ cover} algorithm.
\begin{figure}[b]

	\includegraphics[width=1.0\linewidth]{./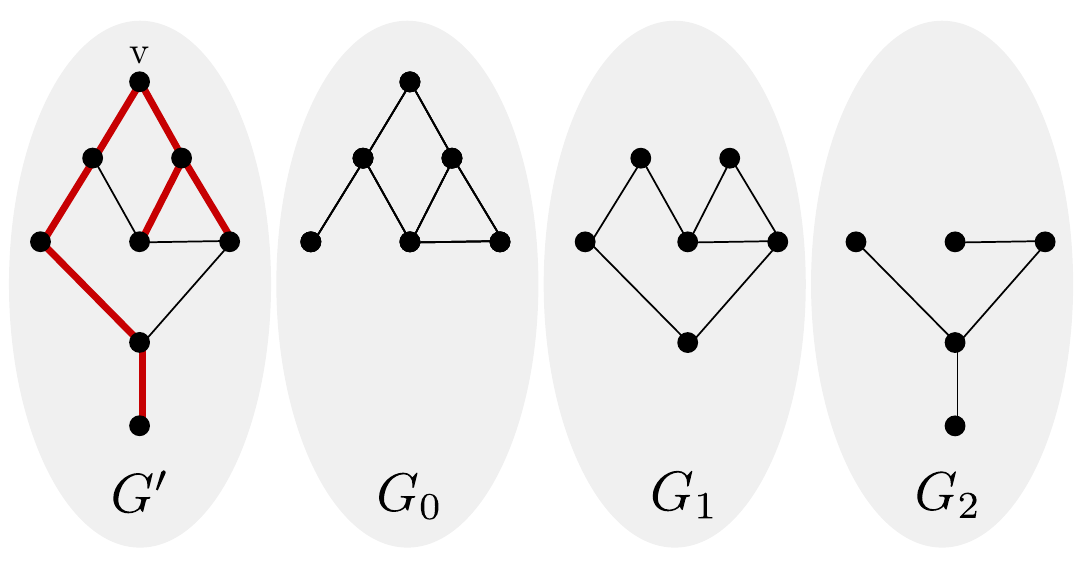}
\caption{A cluster $G'$ is covered by a set of subgraphs that are each induced by $d$ consecutive levels in a BFS tree of the cluster. In the example, $d=2$. The bold BFS tree of the graph $G'$ rooted at $v$ guides the covering of the graph with the induced subgraphs $G_0, G_1, G_2$. If the original graph contains a pattern of diameter $d$, then at least one of the subgraphs does as well. Note that the graphs $G_3$ and $G_4$ are not needed because their diameter is smaller than the pattern's diameter.}
\label[figure]{fig:k-cover}
\end{figure}

\vspace{1em}
\textbf{Parallel Treewidth $k$-$d$-Cover.}
\begin{enumerate}
	\item Run Exponential Start Time $2k$-Clustering on $G$.
	\item For each cluster, choose an arbitrary root $v$ and run a naive parallel BFS within the cluster.
	\item This yields a BFS tree for each cluster. For each level $i$ of the tree, output the subgraph $G_i$ induced by the vertices at distance $i$ through $i+d$ from $v$ (as illustrated in \Cref{fig:k-cover}).
\end{enumerate}
\vspace{1em}

\pagebreak

The algorithm guarantees that each of the subgraphs has low treewidth and that every occurrence of the pattern graph is in at least one of the subgraphs with constant probability: 
\begin{theorem}
For a planar target graph $G$ and a connected pattern graph $H$ with $k$ vertices and diameter $d$, a Parallel Treewidth $k$-$d$ Cover produces
 a set of induced subgraphs $G_i$ of $G$ such that:
\begin{itemize}
	\item Every graph $G_i$ has treewidth at most $3d$.
	\item Every vertex of $G$ is contained in at most $d$ graphs $G_i$.
	\item Every fixed occurrence of $H$ is contained in at least one of the graphs $G_i$ with probability at least $1/2$.
\end{itemize}
The algorithm takes, w.h.p., $O(n d)$ work and $O(k\log n)$ depth.
\end{theorem}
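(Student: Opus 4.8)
The plan is to establish the three structural guarantees and then the resource bounds, each resting on one ingredient from earlier in the excerpt; fix a cluster $G'$ with BFS root $v$, so that $G_i$ is the subgraph induced by the vertices of $G'$ at levels $i,\dots,i+d$. For the treewidth bound I would show that a ``slab'' of $d+1$ consecutive BFS levels of a planar graph has treewidth at most $3d$. The mechanism is to turn the slab into a bounded-diameter planar graph: delete from $G'$ every vertex above level $i+d$ and contract all vertices below level $i$ (together with $v$) into a single apex $r$, giving $G_i^\ast$. Deletion and contraction preserve planarity, $G_i$ survives as an induced subgraph of $G_i^\ast$, and every surviving vertex reaches $r$ along its BFS-tree path, which never leaves levels $i,\dots,i+d$, so $G_i^\ast$ has diameter $O(d)$. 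Eppstein's relation between diameter and treewidth in planar graphs then bounds $\mathrm{tw}(G_i^\ast)$, and monotonicity of treewidth under subgraphs gives $\mathrm{tw}(G_i)\le\mathrm{tw}(G_i^\ast)\le 3d$. I expect this to be the main obstacle: it is the only step that genuinely uses planarity, and the delicate part is the contraction bookkeeping that keeps $G_i$ an induced subgraph while shrinking the number of active levels to $d$, which is what pins down the constant $3$.

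The other two structural properties are combinatorial. For the multiplicity bound, a vertex at level $\ell$ is copied into $G_i$ exactly for the indices $i$ with $i\le\ell\le i+d$, a run of at most $d+1$ consecutive (admissible) windows, giving the stated $O(d)$ per-vertex multiplicity. For the probability bound I would combine two facts. First, \Cref{obs:clustering} gives that with probability at least $1/2$ no edge of the fixed occurrence crosses a cluster of the $2k$-clustering; since the occurrence is connected, all of its vertices then lie in a single cluster $G'$. Second, conditioned on this, the occurrence occupies few BFS levels: because $H$ has diameter $d$, any two of its vertices are at distance at most $d$ in $G'$, so their levels (distances from $v$) differ by at most $d$, and the occurrence lies in a band of at most $d+1$ consecutive levels. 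This band coincides with a single window, so the occurrence is contained in the corresponding $G_i$; hence the occurrence lands in some $G_i$ with probability at least $1/2$.

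Finally I would sum the costs of the three phases, invoking \Cref{lemma:exp-start} with $\beta=2k$. The clustering costs $O(n)$ work and $O(k\log n)$ depth and yields, w.h.p., clusters of diameter $O(k\log n)$. A level-synchronous parallel BFS run independently on the vertex-disjoint clusters of the planar graph $G$ (which has $O(n)$ edges in total) costs $O(n)$ work and, w.h.p., depth proportional to the cluster diameter, i.e. $O(k\log n)$. Materializing the subgraphs dominates the work: by the multiplicity bound each vertex and each edge is duplicated $O(d)$ times, so the total output size is $O(nd)$ and can be assembled in $O(nd)$ work and $O(\log n)$ depth with prefix sums over the level labels. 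Adding the phases gives $O(nd)$ work and $O(k\log n)$ depth, w.h.p., as claimed.
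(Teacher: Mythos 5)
Your proposal is correct and follows essentially the same route as the paper: Exponential Start Time $2k$-clustering plus \Cref{obs:clustering} for the probability bound, the ``band of at most $d+1$ consecutive BFS levels'' argument (the paper phrases it via the first occurrence vertex reached by the BFS) for containment in some $G_i$, and the same per-vertex multiplicity and cluster-diameter accounting for the $O(nd)$ work and $O(k\log n)$ depth. The only difference is that where the paper simply cites Eppstein for the treewidth bound (``each $G_i$ is a subgraph of a planar graph of diameter $d$''), you unfold that citation into its standard proof (contracting the levels below the slab into an apex), which is a legitimate expansion, though as you note the exact constant $3$ comes from the cited lemma rather than from the crude diameter bound of the contracted graph.
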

\begin{proof}
Each of the graphs $G_i$ is a subgraph of a planar graph with diameter $d$. Hence, it has treewidth at most $3d$~\cite{DBLP:conf/soda/Eppstein95}.
By \Cref{obs:clustering}, an occurrence $H'$ of $H$ is in the same cluster with probability at least $1/2$. 
 If this is the case, consider the first vertex $u$ of the pattern occurrence $H'$ encountered during the BFS done for the cluster and let $i$ be the distance of $u$ from the root $v$ of the BFS tree. Then, the occurrence $H'$ is an induced subgraph of $G_i$.
 
The clusters have diameter $O(k \log n)$. Hence, the BFSes have $O(k \log n)$ depth. Each vertex and edge is part of at most $d$ subgraphs by construction, which implies that the work is $O(nd)$. 
\end{proof}

It remains to find (in parallel) occurrences of the pattern on each of the low treewidth subgraphs we constructed. The algorithm in \Cref{sec:bounded-treewidth-algo} requires that a tree decomposition of the subgraph has already been computed. For a planar graph, constructing such a decomposition of width $3d$ takes $O(n)$ work and $O(d)$ depth given a planar embedding of the graph~\cite{DBLP:conf/soda/Eppstein95, DBLP:journals/jacm/Baker94}. Computing a planar embedding takes $O(n)$ work and $O(\log ^2 n)$ depth~\cite{DBLP:conf/focs/KleinR86}.


\pagebreak

\section{Algorithm for Bounded Treewidth}\label{sec:bounded-treewidth-algo}

The main result of this section is a parallel algorithm to solve subgraph isomorphism in parallel on graphs of bounded treewidth. It is based on a simplified version of the algorithm from Eppstein~\cite{DBLP:conf/soda/Eppstein95}. We transform the original problem into a graph search problem. Exploiting the particular structure of the resulting acyclic graph allows us a low depth and work-efficient solution.

\begin{lemma}
Deciding if a connected pattern graph $H$ is isomorphic to a subgraph of the target graph $G$ of treewidth $\tau$ takes $O(k\log^2 n)$ depth and $O( (\tau+3)^{3k+1} n)$ work. The bounds hold w.h.p..
\end{lemma}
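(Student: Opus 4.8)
The plan is to reformulate subgraph isomorphism on a bounded-treewidth graph as a dynamic program over the decomposition tree $\mathcal{T}$, then recast that dynamic program as a reachability/evaluation problem on a directed acyclic graph (DAG) so that the parallel bounds follow from low-depth DAG evaluation. First I would fix a tree decomposition of $G$ of width $\tau$ with $O(n)$ nodes, arranged (as the preliminaries permit) so that every interior node has exactly two children. The standard DP associates to each tree node $X_i$ a table of \emph{partial occurrences}: a partial occurrence records which vertices of the pattern $H$ have already been mapped, together with the restriction of the injection $\phi$ to the vertices of $H$ currently mapped into the bag $X_i$. The key bookkeeping, following Eppstein, is that to extend a partial occurrence correctly one only needs to know how $\phi$ behaves on the \emph{interface} — the vertices of $H$ mapped into the current bag — because the separator property of $\mathcal{T}$ guarantees that the rest of the pattern on each side is handled independently in the two child subtrees. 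The number of such interface states per node is bounded by the number of ways to map a subset of the $k$ pattern vertices into the $\tau+1$ bag slots (including an "unmapped" marker), which is where the $(\tau+3)^{O(k)}$ factor originates; I would make this counting precise to land exactly on $(\tau+3)^{3k+1}$, distributing the budget so each of $O(n)$ nodes contributes $(\tau+3)^{3k}$ states and leaving one factor for the combination cost.

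Next I would build the DAG $D$ whose nodes are the (state, tree-node) pairs and whose edges encode the bottom-up combination rule: a state at an interior node $X_i$ is reachable exactly when compatible states at its two children can be merged consistently with the edges of $H$ incident to the interface. Because each interior node has two children and each combine step examines a pair of child states, the DAG has $(\tau+3)^{O(k)} n$ nodes and its \emph{height} is the height of $\mathcal{T}$. The decision answer is then read off as reachability of an accepting state at the root, so evaluating the DP amounts to a monotone boolean evaluation (or reachability) over $D$, which can be done with work linear in the size of $D$ and depth proportional to its height times the per-level cost.

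The main obstacle — and the step I expect to carry the real weight — is the depth. A balanced tree decomposition is not guaranteed, so the height of $\mathcal{T}$, and hence the natural DAG height, can be $\Theta(n)$ rather than $O(\log n)$; this is exactly the "high depth dynamic program" flagged in the introduction. To defeat this I would apply tree contraction / path-shortcutting on $\mathcal{T}$: introduce \emph{shortcut} edges for a small set of nodes so that the effective evaluation depth drops to $O(\log n)$. The subtlety is that the DP combine operation must be made \emph{associative} (compositional) so that partial results along a path can be contracted in the standard rake-and-compress framework; I would argue that the interface-state transition induced by a tree path is itself describable by a function on interface states of bounded description size (again $(\tau+3)^{O(k)}$), so composing two such transition functions is well-defined and computable within the stated work, and tree contraction then yields $O(\log n)$ rounds. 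Accounting for the $O(\log n)$ extra factor from propagating through the contracted structure, and a further factor of $k$ from the diameter/interface handling, gives the claimed $O(k \log^2 n)$ depth, while the total work remains $(\tau+3)^{3k+1} n$ because tree contraction is work-efficient and each of the $O(n)$ combine steps costs $(\tau+3)^{O(k)}$. Finally I would note the result holds w.h.p.\ since the only randomization inherited here is from the (earlier) cover/clustering stage; the bounded-treewidth solve itself is deterministic once the decomposition is given.
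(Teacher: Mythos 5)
Your proposal is sound in outline, but it parallelizes the dynamic program by a genuinely different mechanism than the paper. The paper never contracts the decomposition tree with composable transition functions; instead it (i) splits $\mathcal{T}$ into $O(\log n)$ layers of vertex-disjoint paths, (ii) for each path—processing layers in order, so that every off-path child is already solved—builds a DAG on partial matches in which validity becomes \emph{plain} reachability, and (iii) makes that reachability work-efficient by observing that along any path at most $k$ edges match a new pattern vertex, while the remaining ``trivial-extension'' edges form a directed forest; shortcuts of exponentially increasing length are added only at every $\log n$-th vertex of that forest, giving $O(k\log n)$ hop-diameter per path without the $\Theta(\log n)$ work blow-up of naive shortcutting. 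Your route replaces steps (ii)--(iii) by rake-and-compress tree contraction on $\mathcal{T}$ itself, composing path transitions. For that to go through you must justify the one claim you only assert: a path transition is not an arbitrary map on sets of partial matches (which would have doubly exponential description size), but it \emph{is} union-preserving up to a constant part, i.e.\ of the form $S \mapsto R[S] \cup B$ for a relation $R$ on the $(\tau+3)^k$ partial matches, so it is representable in $(\tau+3)^{2k}$ bits, closed under composition (Boolean matrix product, $(\tau+3)^{3k}$ work), and closed under projection when a raked child's value is fixed—exactly the hypotheses of the expression-tree-evaluation lemma the paper proves in its appendix (and uses there only for computing layer numbers). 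Two smaller corrections: your intermediate framing of the whole-tree DAG as ``reachability'' is not accurate, since an interior state needs compatible states at \emph{both} children (an AND), which plain reachability cannot express—this is precisely why the paper first reduces to paths, and why in your version it is the tree contraction, not reachability, that carries the argument; and the factor $k$ in your depth bound does not come from ``diameter/interface handling'' but from the $O(\log((\tau+3)^k)) = O(k\log n)$ depth of each relation composition, multiplied by $O(\log n)$ contraction rounds. With those points made precise, both approaches land on $O((\tau+3)^{3k+1}n)$ work and $O(k\log^2 n)$ depth; the paper's method keeps only sets of valid matches (vectors of size $(\tau+3)^k$) at the price of the structural forest observation, whereas yours is more generic—it parallelizes any tree-shaped DP with union-preserving transitions—at the price of materializing $(\tau+3)^{2k}$-size relations.
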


The overall idea of the sequential algorithm is to gradually compute the subgraph isomorphism while traversing the decomposition tree in a bottom-up fashion. We start by discussing the \emph{partial matches} (partially completed subgraph isomorphisms) the algorithm employs, which are crucial for the parallel algorithm as well.

\subsection{Partial Matches}

Every node $X$ in the decomposition tree corresponds to a subgraph $G[X]$ induced by $X$ in the target graph $G$ with only a small number of vertices $\tau+1$. Moreover, the descendants of the node $X$ (together with $X$) induce a subgraph $G_X$ of the graph $G$ that is separated from the rest of the graph $G$ by the vertices in the tree decomposition node $X$. The idea of \emph{partial matches} is to find occurrences of sub-patterns of the pattern $H$ within these subgraphs and combining them in a bottom-up fashion in the tree decomposition.

\emph{Partial matches} exist between subgraphs of the pattern graph $H$ and these induced subgraphs $G_X$. Because vertices that are in the subgraph $G_X$ but are not in the separating set $X$ are not directly connected to the rest of the graph $G$, it is not necessary to explicitly store the mapping between pattern and target graph for these vertices in order to combine a partial match inside this subgraph with partial matches from the rest of the graph. Hence, when we build partial matches, only the $\tau^k$ different mappings for these vertices in the separating set $X$ are important. The remaining vertices that have already been matched in a child are recorded as such. See \Cref{fig:partial-match} for an example.

\begin{figure}
		\includegraphics[width=1.0\linewidth]{./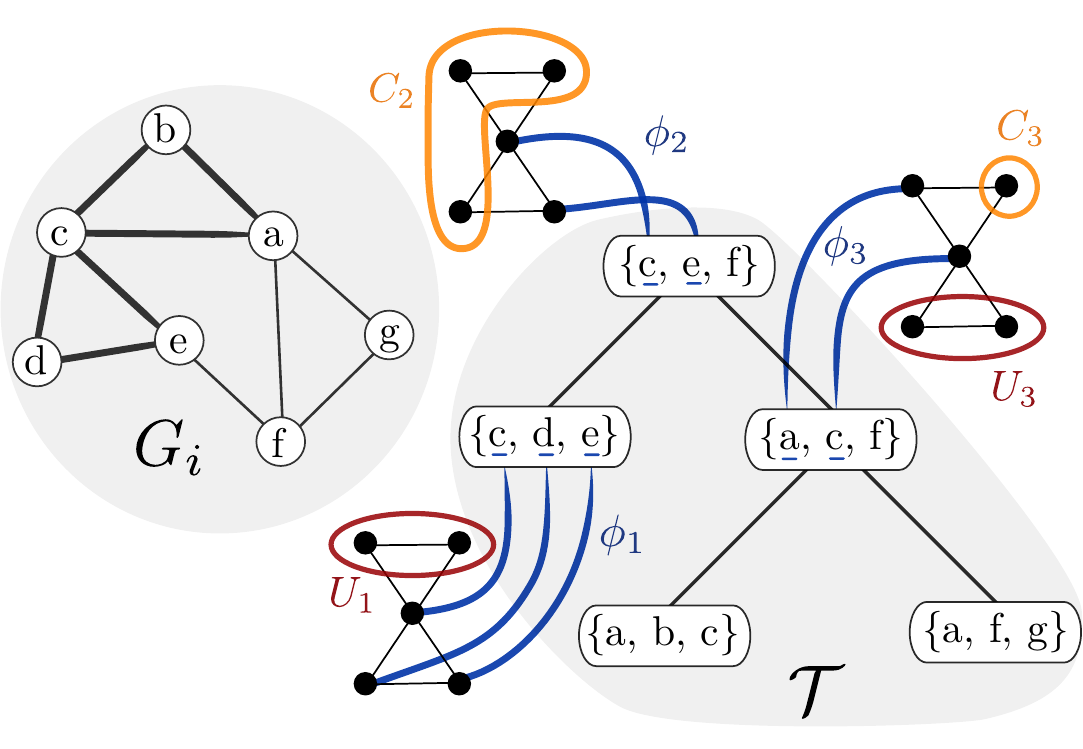}
			\vspace{-2em}
	\caption{A valid partial match of the root $\{c, e, f\}$ of the decomposition tree is built from two compatible (and valid) partial matches of the left child $\{c, d, e\}$ and the right child $\{a, c, f\}$. Observe how every vertex in the pattern that is marked as `\emph{matched in a child}' at the root is matched in exactly one of the two other partial matches. Also, note how the partial matches agree on the vertices that the decomposition nodes have in common: the partial match of the root agrees with the left child's match on $c$ and $e$ and with the right child's match on $c$.
		\vspace{-1em}
}

\label{fig:partial-match}
\end{figure}

Formally, a partial match of $X$ is a triple $(\phi, C, U)$, where $C$ denotes the set of vertices \emph{matched in a child}, $U$ the set of vertices marked as \emph{unmatched}, and a subgraph isomorphism function $\phi$ from the subgraph $H[V(H)\backslash (C \cup U)]$ to the subgraph $G[X]$. 
If a vertex $v$ of $H$ is \emph{matched in a child}, the vertex $v$ is mapped to a vertex in $G_X$ which does not appear in $X$.
If a vertex $v$ of $H$ is \emph{unmatched}, then it is not matched to any vertex that appears in the subgraph $G_X$. 

\subsection{Eppstein's Sequential Algorithm} The idea is to extend the \emph{partial matches} while traversing the decomposition tree $\mathcal{T}$ bottom-up. 
The goal is to construct a partial match of the root node where no vertex is \emph{unmatched}. We focus on how to construct such a partial match for the root, from which a specific subgraph isomorphism can be recovered efficiently (by collecting appropriate partial isomorphisms in a top-down traversal of the tree; see also ~\Cref{sec:listing-recovering}). 

A partial match of a child node $Y$ can be extended when going to a parent node $X$ by matching some additional vertices that were unmatched by the child match to $X$, marking the vertices that have been matched by the child but are not in the parent as \emph{matched in a child}, and leaving the rest of the partial isomorphism function the same (the vertices that were not newly matched in $X$ remain unmatched).

A partial match that can be extended to a parent's partial match (possibly together with another child's partial match) is called \emph{consistent} with a parent's partial match.
The precise rules for being \emph{consistent} follow. Consider a node $X$ of the decomposition tree, one of its children $Y$, and the partial matches $(\phi_X, C_X, U_X)$ of $X$ and $(\phi_Y, C_Y, U_Y)$ of $Y$.
For all vertices $v$ in $H$:
\begin{itemize}
	\item If $v$ is matched by $\phi_Y$ to a node in $X$ \emph{or} by $\phi_X$ to a node in $Y$, then they map to the same value: $\phi_X(v) = \phi_Y(v)$. This prevents the partial matches $(\phi_X, C_X, U_X)$ and $(\phi_Y, C_Y, U_Y)$ to map the same vertex in the pattern graph to different nodes in the target graph. 
	\item If the child partial match $\phi_Y$ matches a vertex $v$ to a vertex not in the parent label set $X$ \emph{or} marks the vertex $v$ as \emph{matched in a child} (i.e., in $C_X)$, then the parent partial match marks the vertex $v$ as \emph{matched in a child} (i.e., in $C_Y$). In particular, we have $C_Y \subseteq C_X$.  
\end{itemize}
Note that these rules imply that the child's partial match does not match any vertex that is unmatched by the parent, i.e.,  $U_Y\subseteq U_X$. 

The point of the following combination rule is to ensure (on top of consistency) that a vertex that is marked as \emph{matched in a child} in the parent is matched in exactly one of the children. 
A partial match $M_X$ of node $X$ is \emph{compatible} with a partial match $M_L$ of the left child $L$ of $X$ and partial match $M_R$ of the right child $R$ of $N$ if the following conditions hold:
\begin{itemize}
	\item The partial matches $M_L$ and $M_R$ are both consistent with the partial match $M_X$.
	\item If a vertex is marked as \emph{matched in a child} by $M_X$, then it is marked as \emph{unmatched} in exactly one of the child matches $M_L$ and $M_R$. 
\end{itemize}

A partial match is \emph{valid}, if it is compatible with two partial matches of its children, or if it does not mark any vertices as \emph{matched in a child}.
Note that the trivial partial match that marks everything as \emph{unmatched} is always valid.
A valid partial match of the root node that does not mark any vertex as unmatched certifies the existence of a subgraph isomorphism.

The sequential algorithm traverses the decomposition tree bottom-up and enumerates all possible partial matches for the current node, then checks which are valid (given the valid matches for the children). For a tree decomposition of width $\tau$ and a pattern of size $k$, there are at most $(\tau+3)^k$ possible partial matches per node. There are at most $(\tau+3)^{3k}$ combinations of partial matches of the parent and its two children and validating a combination takes $O(\tau)$ time. Hence, the overall runtime is $O((\tau+3)^{3k+1}m)$.

\subsection{Parallel Algorithm}

The issue is that even a low-diameter planar graph might have a decomposition tree that has a large height of $\Omega(n)$. Therefore, parallelizing the computation at each node of the decomposition tree is not enough. It is possible to transform any tree decomposition into a decomposition of height $O(\log n)$ with \emph{three times} the treewidth~\cite{DBLP:conf/icalp/BodlaenderH95}, which increases the work by a factor of $\Omega(9^{k})$.

To avoid this, we parallelize across the height of the decomposition tree.
In order to obtain a simpler problem, we partition the tree into paths. Then, we solve the problem on each of the paths. A path can be solved once all paths that start at a child of a node in the path have been solved. We avoid the sequential bottleneck by transforming the problem of finding valid partial matches in these subpaths of the tree decomposition into a reachability question in an acyclic directed graph with special structure. The reachability question can be solved work-efficiently with a low depth on this acyclic graph by introducing shortcuts of exponentially increasing distance to a carefully selected subset of the vertices.
 
\subsubsection{Decomposition into Paths}

Let us start by discussing how to decompose the tree into suitable subpaths. Walk from every leaf towards the root until reaching a branching node (i.e., a node with at least $2$ children). Remove the visited paths from the tree, and proceed recursively. This decomposition can be implemented efficiently using parallel expression tree evaluation (tree contraction)~\cite{Reif:1993:SPA:562546,DBLP:conf/focs/MillerR85}:

\begin{lemma}[\Cref{sec:bough-decomp}] ~\label{lem:bough-decomposition}
	A tree $T$ can be decomposed into a set of paths $P$ where the paths are grouped into $O(\log n)$ layers with the property that vertices in the $i$-th layer have no children in a layer larger than $i$. This decomposition takes $O(n)$ work and $O(\log n)$ depth.
\end{lemma}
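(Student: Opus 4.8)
The plan is to make the informal peeling precise, assign every vertex a \emph{layer}, and then reduce the entire computation to a single tree-contraction pass. Round $1$ of the peeling removes all maximal branch-free downward paths ending in a leaf (the boughs); this may turn some branching vertices into single-child vertices or into leaves. Repeating this on the remaining tree defines, for each vertex $v$, the round $\ell(v)$ in which $v$ is removed; the boughs of round $i$ are the maximal downward path segments consisting of vertices with $\ell(\cdot)=i$. I will (i) give a purely local recursion for $\ell$, from which both the path structure and the ``no child in a larger layer'' property follow at once; (ii) bound the number of distinct layers by $O(\log n)$; and (iii) evaluate $\ell$ and extract the paths in $O(n)$ work and $O(\log n)$ depth via tree contraction.

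The structural heart is the claim that $\ell$ obeys a local recursion: $\ell(v)=1$ if $v$ is a leaf, and otherwise, writing $M=\max_{c}\ell(c)$ for the maximum over the children $c$ of $v$, we have $\ell(v)=M$ if $M$ is attained by exactly one child and $\ell(v)=M+1$ if it is attained by at least two children. I would prove this by induction on the peeling: a branching vertex can outlive all of its children by at most one round, and does so precisely when two of its child-boughs reach it in the same round, forcing it to momentarily become an isolated leaf before it can itself be peeled. Granting the recursion, the layer property is immediate since $\ell(v)\ge \ell(c)$ for every child $c$; and the ``same-layer child'' links, where each $v$ points to the unique child $c$ (if any) with $\ell(c)=\ell(v)$, give every vertex at most one same-layer child and at most one same-layer parent, so the vertices of each layer decompose into vertex-disjoint parent--child paths, which are exactly the boughs of $P$.

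To bound the number of layers I would show, by induction on subtree size, that $\ell(v)\le 1+\log_2 \lambda(v)$, where $\lambda(v)$ is the number of leaves below $v$. The only nontrivial case is when the maximum child value $M$ is attained by at least two children $c_1,c_2$: then $\lambda(c_1),\lambda(c_2)\ge 2^{M-1}$ by the inductive hypothesis, so $\lambda(v)\ge 2^{M-1}+2^{M-1}=2^{M}$, giving $\ell(v)=M+1\le 1+\log_2\lambda(v)$. Applying this at the root yields at most $1+\log_2 n=O(\log n)$ layers.

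For the implementation I would observe that $\ell$ is exactly an expression-tree evaluation with constant-size node labels: along a chain of single-child vertices the combine rule is the identity, while at a branching vertex the partial state needs only the current maximum among the already-raked children together with one bit recording whether that maximum is attained more than once. Composing ``identity'' along chains and applying ``$\max$-with-multiplicity'' when raking a child fits the rake--compress framework, so parallel tree contraction~\cite{Reif:1993:SPA:562546,DBLP:conf/focs/MillerR85} computes all values $\ell(v)$ in $O(n)$ work and $O(\log n)$ depth; setting the same-layer-child pointers is then a local $O(1)$-depth, $O(n)$-work step, and materializing each bough as an explicit path is one application of list ranking in $O(n)$ work and $O(\log n)$ depth. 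I expect this last reduction, rather than the combinatorics, to be the main obstacle: naively peeling one full layer at a time costs $\Theta(\log n)$ depth per layer and hence $\Theta(\log^2 n)$ overall, so the crux is precisely the recognition that the layer numbers satisfy the above local, constant-state recursion and therefore collapse into a single $O(\log n)$-depth contraction.
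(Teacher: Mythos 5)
Your proposal is correct and follows essentially the same route as the paper: the same local recursion for layer numbers (maximum of children's layers, incremented exactly when the maximum is not unique), evaluated by parallel tree contraction with a constant-size state, followed by list ranking to extract the per-layer paths. Your ``maximum plus a uniqueness bit'' state is precisely the paper's family of unary functions $f_i^{\neq}$ and $g_i^{=}$ (which the paper verifies is closed under composition and projection), and your leaf-counting induction is just a sharper write-up of the paper's factor-$2$ argument for the $O(\log n)$ layer bound.
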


\subsubsection{The Graph of Partial Matches.}

We can reason about how to construct the valid partial matches for a subpath $\mathcal{P}$ of the tree decomposition, assuming we already solved all paths descending from a child of $\mathcal{P}$. Specifically, we derive locally at each node in the subpath $\mathcal{P}$ a set of partial matches that are valid partial matches \emph{if at least} one of the partial matches of a child node of $\mathcal{P}$ is also a valid partial match. At the leaf of the path, we know which partial matches are valid (because both children have already been solved). This observation leads to the idea to construct a directed acyclic graph of partial matches where reachability models the validity of the partial matches, as follows.

Let $\mathcal{P}$ be a subpath of the tree decomposition $\mathcal{T}$. Consider a node $X$ in the path $\mathcal{P}$ and assume we already computed the partial matches for the left child $L$ of $X$ (the other child is the right child $R$, where $R\in \mathcal{P}$). Then, we can check which partial matches of $X$ and the right child of $X$ are compatible with a partial match of $L$. This yields for every partial match $M_X$ of $X$ a set of partial matches of $R$ that would validate the partial match $M_X$.

We construct a directed acyclic graph $G'$ based on this idea. For the leaf node of $\mathcal{P}$, there is a vertex in $G'$ for every valid partial match. For every other node $X$ in $\mathcal{P}$, there is a vertex for every partial match of that node $X$. Then, there is an edge from a partial match $M_R$ of the child $R$ of $X$ to a partial match $M_X$ if there is a valid partial match $M_L$ of the other child $L$ of $X$ such that $M_X$ is compatible with $M_L$ and $M_R$.

Reachability in the graph $G'$ can model which partial matches are valid: A partial match is tagged as valid if it does not mark any vertices as \emph{mapped by a child}. The partial matches from the leaf node of $\mathcal{P}$ are also tagged as valid.
Then, the valid partial matches are those that are reachable from a partial match tagged as valid in the directed acyclic graph $G'$.

\subsubsection{Finding Valid Partial Matches Via Reachability}

Next, we discuss how to compute all the valid partial matches using the directed acyclic graph $G'$. Note that this graph $G'$ still has a diameter equal to the length of the path $\mathcal{P}$, so we cannot directly use BFS. Hence, we introduce \emph{shortcuts} of exponentially increasing distance to reduce the diameter to $O(k\log n)$. After introducing the shortcuts, we use naive parallel BFS to determine all the reachable vertices. The details follow. 

A simple (but a factor $\log n$ work-inefficient) way to solve reachability is to introduce shortcuts for every vertex (similarly to some list ranking and connected components algorithms~\cite{Reif:1993:SPA:562546}):

\begin{enumerate}
	\item Introduce shortcuts in $\log n$ rounds $0, 1, \dotsc, \log n$. 
	\item Round $i$ creates shortcuts of length $2^{i}$. The edges of the graph are shortcuts of length $1$.
	\item For round $i>0$, for every vertex $u$, look at all its outgoing edges of length $2^{i-1}$. For each such edge $(u,v)$, look at all edges $(v, w)$ of equal length $2^{i-1}$ and add an edge $(u,w)$ of length $2^{i}$ to $u$.
\end{enumerate}
This would result in $O(\log n)$ depth, but also be work inefficient by up to a factor $\Theta(\log n)$ (when $k=O(1)$) because \emph{every} vertex in the graph $G'$ does $\Omega(\log n)$ work.

The crucial observation to overcome this limitation is that any valid partial match is constructed by matching a ``new'' vertex at most $k$ times. Thus, there are at most $k$ edges in $G'$ that match new vertices along any path in $G'$ towards a valid partial match. The rest of the edges in $G'$ do not introduce any new matches, but instead, translate from the partial match of a child to an equivalent partial match of the root. Since there is only one way not to introduce any new matches  (see \Cref{fig:trivial-match}), the subgraph of edges that do not introduce new edges is a directed forest (where edges are directed towards the roots). Hence, it suffices to introduce shortcuts in this forest $F$.

\begin{figure}
		\includegraphics[width=1.0\linewidth]{./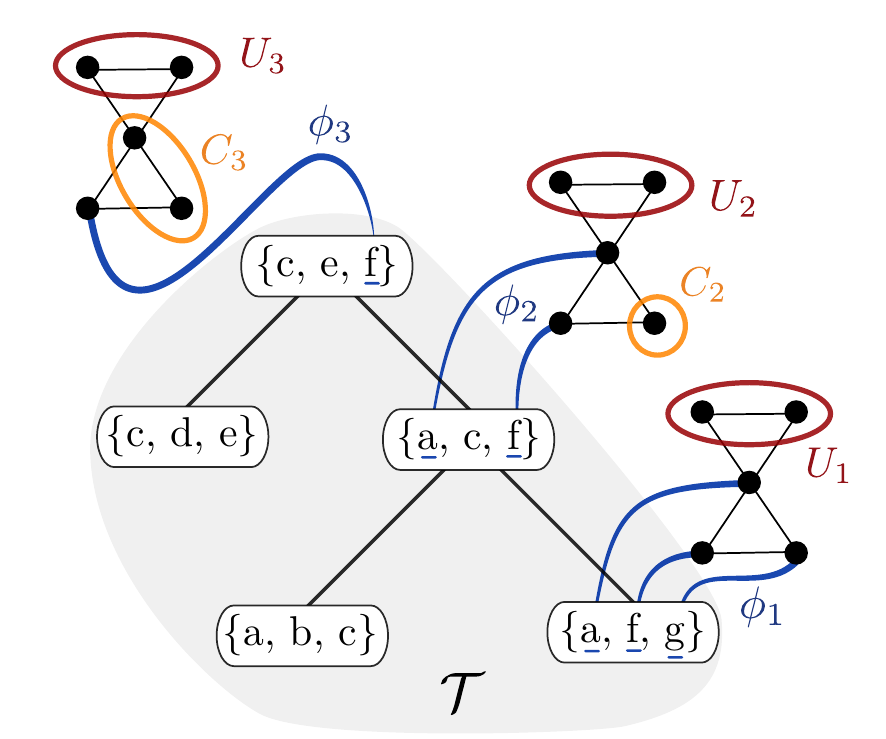}
		\vspace{-2em}
	\caption{
	The valid partial match $(\phi_1, C_1, U_1)$ at node $\{a,f,g\}$ in the decomposition tree $\mathcal{T}$ can be turned into a valid partial match $(\phi_2, C_2, U_2)$ of the parent $\{a, c, f\}$ without matching a new vertex in exactly one way: The partial match has the same set of unmatched vertices $U_2=U_1$. The set of children vertices contains the vertex that was matched to $g$, because $g$ is not in $\{a,c,f\}$. The isomorphism function $\phi_2$ is the same as $\phi_1$ on all the vertices in $\{a,f,g\} \cap \{a ,c, f\}$ and undefined elsewhere. The partial match $(\phi_2, C_2, U_2)$ is turned into $(\phi_3, C_3, U_3)$ similarly, except that now the set $C_3$ of vertices \emph{matched in a child} also includes those in $C_2$}
	\label{fig:trivial-match}
	\vspace{-1em}
\end{figure}

Because the subgraph $F$ is a forest, shortcuts can be introduced work-efficiently in parallel:
In each tree of $F$, decompose the tree into paths using \Cref{lem:bough-decomposition}. In each path, choose every $\log n$-\emph{th} vertex as a vertex where shortcuts are introduced. Add a shortcut from every such vertex to the next, then add shortcuts of exponentially increasing distance between them (within the path). Moreover, add a shortcut from every vertex to the first vertex in a lower layer.
\begin{lemma}
	Computing the valid partial matches of the graph pattern $H$ in a subpath $\mathcal{P}$ of a decomposition tree $\mathcal{T}$ of width $\tau$ takes $O(|\mathcal{P}| ((\tau+3)^{3k+1}))$ work and $O(k\log n)$ depth.
\end{lemma}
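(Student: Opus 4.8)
The plan is to follow the reduction already set up: build the directed acyclic graph $G'$ of partial matches, observe that validity of a match is exactly reachability from a valid-tagged vertex, and then compute that reachability work-efficiently by shortcutting only the forest $F$ of ``no new match'' edges. First I would bound the construction of $G'$. Each node of $\mathcal{P}$ contributes at most $(\tau+3)^k$ partial matches, since each of the $k$ pattern vertices is either mapped into the separator, marked \emph{matched in a child}, or \emph{unmatched}. To place the edges incident to a node $X$, I enumerate all triples $(M_X, M_L, M_R)$ of partial matches of $X$, its already-solved child $L$, and its in-path child $R$, testing compatibility in $O(\tau)$ time; there are $(\tau+3)^{3k}$ such triples per node. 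Summing over the $|\mathcal{P}|$ nodes gives $O(|\mathcal{P}|(\tau+3)^{3k+1})$ work, and since the triples are independent this runs in low depth. This step already accounts for the claimed work bound, so every later step only needs to stay within it.

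Next I would argue the depth via the shortcutting. The forest $F$ has out-degree at most one, because there is a unique way to propagate a match one level up without matching a new vertex (\Cref{fig:trivial-match}), so its trees can be processed by the decomposition of \Cref{lem:bough-decomposition}. I would shortcut $F$ exactly as described: bough-decompose each tree into $O(\log n)$ layers, place exponentially spaced shortcuts on one out of every $\log n$ vertices inside each path, and add from every vertex a single shortcut skipping to the first vertex that lies in a different layer on the way to the root. The key claim is that after shortcutting, any ancestor of a marked vertex in $F$ is reached within $O(\log n)$ hops: a root-ward path crosses boughs in strictly increasing layer order and hence meets only $O(\log n)$ boughs; the per-vertex cross-layer shortcut skips from any vertex to the entry of the next bough in one hop, and the within-path exponential shortcuts let us land on any target vertex inside its bough in $O(\log n)$ hops. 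Since every source-to-valid-match path in $G'$ contains at most $k$ ``new match'' edges (each pattern vertex is newly matched at most once) interleaved with at most $k+1$ forest runs, the shortcutted diameter of $G'$ is $O(k\log n)$, so a parallel BFS from the valid-tagged vertices finds all valid matches in $O(k\log n)$ depth.

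Finally I would verify that the auxiliary steps fit the budget. Writing $N = O(|\mathcal{P}|(\tau+3)^k)$ for the number of vertices of $G'$, placing exponential shortcuts costs $O(\log n)$ per milestone but there are only $O(N/\log n)$ milestones, and each cross-layer shortcut is a single edge, so the added shortcuts total $O(N)$ edges; the BFS is linear in the size of $G'$ together with these shortcuts, which is $O(|\mathcal{P}|(\tau+3)^{2k})$. Both are dominated by the construction cost, so the overall work remains $O(|\mathcal{P}|(\tau+3)^{3k+1})$ and the depth remains $O(k\log n)$.

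The step I expect to be the main obstacle is the depth argument of the second paragraph: formally proving that the combination of bough decomposition, within-path exponential shortcuts, and single-hop cross-layer shortcuts really collapses forest reachability to $O(\log n)$ hops (in particular that a root-ward path meets only $O(\log n)$ boughs and that the BFS can always land precisely on the intended target vertex rather than overshooting it through a shortcut), and then that these $O(\log n)$-hop forest segments compose with the $\le k$ new-match edges to give a clean $O(k\log n)$ diameter while the BFS stays work-efficient, so that small patterns do not pay the extra $\Theta(\log n)$ factor of naive all-vertex shortcutting.
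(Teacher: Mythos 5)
Your proposal is correct and follows essentially the same route as the paper's proof: build the DAG $G'$ of partial matches, shortcut only the forest $F$ of no-new-match edges using the bough decomposition (milestones every $\log n$ vertices with exponential within-path shortcuts, plus one cross-layer shortcut per vertex), and argue that any path to a valid match decomposes into at most $k$ new-match edges interleaved with $O(k)$ forest runs of $O(\log n)$ shortcutted hops each, giving an $O(k\log n)$ diameter for the final parallel BFS. Your accounting is, if anything, slightly more explicit than the paper's own proof, which takes the $O(|\mathcal{P}|(\tau+3)^{3k+1})$ enumeration/validation cost as given (it is established earlier, in the discussion of the sequential algorithm) and only argues that the shortcut edges contribute linear work.
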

\begin{proof}

The work is linear in the number of vertices because we add the edges of exponentially increasing distances to a forest $F$ of $O(|\mathcal{P}|/\log n)$ vertices.

After introducing the shortcuts, the distance from a valid leaf node to any other valid node is $O(k\log n)$: Consider any path $p$ in the original graph $G'$. It contains at most $k$ edges that are not in the forest $F$. 
Therefore, it consists of at most $k$ subpaths $p_1, \dotsc , p_k$ where each $p_i$ is a subgraph of the forest $F$. Each subpath $p_i$ is contained in a maximal tree $F_i$ of $F$. 
By \Cref{lem:bough-decomposition}, $p_i$ intersects at most $O(\log n)$ subpaths of $F_i$. It takes $O(\log n)$ hops to move from the first such subpath to the last (because of the shortcuts to a vertex in a lower layer). Then, it takes an additional $O(\log n)$ hops to traverse the first and last subpath using the shortcuts within each subpath.
We conclude that the overall number of hops to traverse the path $p$ is $O(k \log n)$. 

Together with the depth of constructing the shortcut graph, this means that the depth of the algorithm is $O(k \log n)$.
\end{proof}

\section{Extensions}

We generalize our algorithm to disconnected patterns, show how to list all occurrences of a graph pattern, and characterize the family of graphs for which the algorithm is still FPT.

\subsection{Disconnected Patterns} \label{sec:disconnected}

We extend our algorithm so that it can handle arbitrary disconnected patterns. These patterns are challenging because (in particular) the algorithm for treewidth $k$-cover cannot guarantee that every component of the pattern graph is in the same cluster. 

Consider a pattern graph $H$ consisting of $l$ connected components. Number the components arbitrarily from $1$ to $l$. A naive approach is to try out all $l^n$ possible ways to split the target graph into $l$ components.
A randomized approach (inspired by \emph{color coding}~\cite{DBLP:journals/jacm/AlonYZ95}) allows us to remove the exponential dependency on the number of vertices $n$. It works as follows:

\begin{enumerate}
	\item Color each vertex in $G$ independently and uniformly at random with a number between $1$ and $c$.
	\item For each color $i$, let $G^{i}$ be the subgraph induced by the vertices that have color $i$.
	\item Search for occurrences of the $i$-th component of $H$ in the subgraph $G^{i}$ of color $i$ vertices.
	\item Return true if and only if each search is successful.
\end{enumerate}

\begin{lemma}
Finding (with high probability) an occurrence of a disconnected pattern with $l$ components and $k$ vertices takes $O(l^{k}\log n)$ more work than finding an occurrence of a connected pattern. 
\end{lemma}
\begin{proof}
Consider a fixed occurrence of the pattern $H$. The probability that all of its vertices are assigned to the correct component of $H$ is $l^{-k}$. Hence, $O(l^k)$ repetitions suffice to find a particular occurrence of $H$ with constant probability, and $O(l^k \log n)$ repetitions suffice to certify that no occurrence exists with high probability.
\end{proof}
Note that this technique of finding disconnected patterns by reduction to the connected case is completely general and can be used in conjunction with any subgraph isomorphism algorithm.

\subsection{Listing all Occurrences}\label{sec:listing-si}

We describe the modifications necessary to make our algorithm list all occurrences of a pattern. The first step is to modify the algorithm such that it returns a particular occurrence of a pattern with probability at least $1/2$. Then, we can repeatedly generate a new set of occurrences, remove duplicates (by hashing), until we are confident enough that we have found all occurrences. The main difficulty is that the number of iterations necessary to find all the occurrences depends on the number of occurrences, which we do not know in advance.

However, since every \emph{particular} occurrence is found with probability at least $1/2$ in each iteration, if there is an occurrence that has not yet been found, at least one new occurrence is found with probability at least $1/2$. This argument shows that the process is related to getting many heads in a row when flipping coins: it is unlikely that many iterations in a row do not find a new occurrence.

\begin{observation}\label{obs:coins}
	For all $j\leq i$, the probability that in a sequence of $j$ independent coin flips $i$ heads occur in a row is at most $j 2^{-i}$.
\end{observation}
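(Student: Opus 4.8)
The plan is to bound the probability by a union bound over all positions at which a run of $i$ consecutive heads could begin. First I would introduce, for each admissible starting index $t$, the event $A_t$ that the flips $t, t+1, \dotsc, t+i-1$ all come up heads. Since the event ``$i$ heads occur in a row'' is precisely the union $\bigcup_t A_t$, the whole task reduces to counting these events and bounding each one individually.

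For the second step I would observe that there are at most $j$ admissible starting positions (more precisely $\max(j-i+1,\,0)$, since a run of length $i$ must fit inside the $j$ flips), and that by independence each event satisfies $\Prob{A_t} = 2^{-i}$. Here I use that the coin is fair; note that if ``heads'' only has probability at most $1/2$ — as in the listing application, where a ``head'' means an iteration fails to discover a new occurrence — the same value serves as an upper bound, so nothing is lost. A single application of the union bound then gives
\[
	\Prob{\bigcup_t A_t} \;\le\; \sum_t \Prob{A_t} \;\le\; j \cdot 2^{-i},
\]
which is exactly the claimed inequality.

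Because the statement is quantified over $j \le i$, the bound is in fact immediate in almost every case: when $j < i$ no run of length $i$ fits inside $j$ flips, so the probability is $0 \le j\,2^{-i}$, and when $j = i$ the only relevant event is that all $i$ flips are heads, of probability $2^{-i} \le i\,2^{-i}$. I therefore do not expect any genuine obstacle; the only points requiring care are the off-by-one in counting starting positions and the degenerate regime $i > j$, where the event is empty. I would phrase the argument so that the same union bound, with the count $j-i+1$, also delivers the inequality $j\,2^{-i}$ uniformly in the more useful regime $i \le j$ that is needed to bound the number of fruitless iterations in the listing procedure.
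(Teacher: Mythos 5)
Your proof is correct and follows the same argument as the paper: a union bound over the (at most $j$) possible starting positions of a run, each of which has probability at most $2^{-i}$ by independence, including the remark that the bound only needs heads-probability at most $1/2$. Your additional observations about the off-by-one count $j-i+1$ and the degenerate regime (the statement's quantifier $j\leq i$ appears to be a typo for $i\leq j$, which is the regime actually used in the listing analysis) are careful refinements but do not change the approach.
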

\begin{proof}
	The probability that $i$ heads occur in a row starting from the $y$-th coin flip is at most $2^{-i}$. By a union bound over the $j$ possible start positions, the bound follows.
\end{proof}
This observation still holds even for biased coins, as long as the probability that heads comes up is \emph{at most $1/2$}.

Therefore, we iterate until after $j$ iterations we have seen no new occurrence for $\log_2 j + \Theta(\log n)$ iterations in a row to guarantee that we have found all occurrences with high probability in $n$.

\begin{theorem}
Listing w.h.p. all $x$ occurrences of a connected pattern graph in a planar target graph takes $O(k \log ^ 2 n \ (\log (x) + \log n)))$ depth and $O( (xk + (3k+3)^{3k+1} n) \ (\log n + \log x))$ work. 
\end{theorem}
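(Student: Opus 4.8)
The plan is to turn the decision procedure into a listing procedure by running it repeatedly and accumulating the occurrences it discovers, using the adaptive stopping rule described above. First I would augment a single run so that, instead of merely certifying a valid root match, the bottom-up dynamic program of \Cref{sec:bounded-treewidth-algo} records \emph{all} valid root matches, from which every occurrence contained in a cover subgraph $G_i$ can be reconstructed by the top-down traversal of \Cref{sec:listing-recovering}; I would then feed the reconstructed $k$-tuples into a parallel hash table to discard duplicates (the same occurrence can appear in several $G_i$, since each vertex lies in up to $d$ of them, and across several runs). By the Parallel Treewidth $k$-$d$ Cover guarantee, any \emph{fixed} occurrence lands inside a single cover subgraph with probability at least $1/2$, and the bounded-treewidth solver finds it there; hence one run reports each particular occurrence with probability at least $1/2$. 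A single run therefore costs $O((3k+3)^{3k+1} n)$ work for the search plus $O(k)$ additional work per reported occurrence for reconstruction and hashing, with depth $O(k \log^2 n)$.

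The technical heart is bounding the number $T$ of iterations produced by the stopping rule, which halts once there have been $\log_2 j + \Theta(\log n)$ consecutive iterations without a new occurrence after a total of $j$ iterations. I would argue two things. \textbf{We do not stop too early.} Whenever at least one occurrence is still undiscovered, the remark preceding \Cref{obs:coins} guarantees that an iteration discovers something new with probability at least $1/2$; equivalently, an unproductive iteration is a ``heads'' of a coin whose head-probability is at most $1/2$. A premature halt requires $i := \log_2 j + \Theta(\log n)$ such heads in a row within the first $j$ flips, which by \Cref{obs:coins} (valid for biased coins with head-probability at most $1/2$) has probability at most $j\,2^{-i} = 2^{-\Theta(\log n)} = n^{-\Omega(1)}$, i.e. negligible. \textbf{We do not iterate too long.} For a fixed occurrence the runs are independent, so it is missed in $t$ runs with probability at most $2^{-t}$; a union bound over the $x$ occurrences shows that after $t = O(\log x + \log n)$ iterations all of them are present with high probability, and once that happens the stopping condition is met within a further $\log_2 T + \Theta(\log n)$ iterations. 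Solving the self-referential inequality $T \le O(\log x + \log n) + \log_2 T + \Theta(\log n)$ and using $\log_2 T = o(T)$ gives $T = O(\log n + \log x)$ with high probability.

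Finally I would aggregate the costs over the $T = O(\log n + \log x)$ iterations. The search work totals $O((3k+3)^{3k+1} n\,(\log n + \log x))$; reconstruction and deduplication handle at most $x$ occurrences of size $k$ per iteration, contributing $O(xk\,(\log n + \log x))$; summing gives the claimed work $O\big((xk + (3k+3)^{3k+1} n)(\log n + \log x)\big)$. Because the stopping decision is adaptive, the runs are performed sequentially, so the depth is the per-run depth times the number of runs, $O(k \log^2 n\,(\log n + \log x))$, the hashing adding only lower-order depth. I expect the main obstacle to be the correctness of the adaptive stopping rule: one must verify that the per-iteration ``failure'' events really are dominated by $1/2$-biased coins even though the flips are dependent (it is the per-fixed-occurrence independence across runs that rescues the union bound, while the conditional $\le 1/2$ bound on each unproductive iteration drives the ``not too early'' argument), and that the self-referential bound on $T$ closes, so that the rule simultaneously avoids premature termination and guarantees termination after only $O(\log n + \log x)$ iterations.
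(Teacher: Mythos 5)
Your proposal is correct and follows essentially the same route as the paper's proof: per-iteration discovery probability at least $1/2$ from the cover guarantee, the adaptive stopping rule analyzed via \Cref{obs:coins} for the ``not too early'' direction, a per-occurrence union bound over $x$ occurrences for the ``not too long'' direction, and cost aggregation over $O(\log n + \log x)$ sequential iterations using the reconstruction bounds of \Cref{sec:listing-recovering}. Your treatment is in fact slightly more explicit than the paper's on two points it glosses over — the dependence of the coin flips (handled by the conditional domination argument you give) and the self-referential bound on the iteration count $T$.
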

\begin{proof}
Every iteration finds a specific occurrence with probability at least $1/2$. Hence, after $\log_2 x+\Theta(\log n)$ iterations, the probability that we have \emph{not} found a specific occurrence is at most $x^{-1}n^{-\Omega(1)}$. By a union bound over the $x$ occurrences, the probability that we have not found \emph{all} occurrences is at most $n^{-\Omega(1)}$.
Hence, after $i=\log_2 x+\Theta(\log n)$ iterations, the algorithm will, with high probability, not find any new occurrences (because there are none) and by construction terminate after an additional $O(\log i + \log n)$ iterations.
Overall, the algorithm takes at most $O(\log x+\log n)$ iterations to terminate with high probability. Together with the bounds from \Cref{sec:listing-recovering} this implies the work and depth bounds.

We show that the probability that the algorithm terminates before all occurrences have been found is at most $n^{-\Omega(1)}$. Consider the longest prefix of iterations of the algorithm where it has not found all occurrences. Model these iterations as coin flips, where the coin of an iteration turns up heads if this iteration finds no new occurrence. Heads comes up with probability \emph{at most} $1/2$ because each such iteration finds a new occurrence with probability at least $1/2$. By \Cref{obs:coins}, the probability that (for any $j$ in this sequence) after $j$ coin flips heads comes up $\log_2 j + \Theta(\log n)$ times in a row is at most $n^{-\Omega(1)}$. This situation is the only one in which the algorithm terminates before finding all occurrences.

\end{proof}

Hence, if we can find every occurrence that does not cross a cluster, we can find all occurrences with high probability. It remains to describe how to find these occurrences.

\subsubsection{Recovering All Occurrences for a Cluster} \label{sec:listing-recovering}

Every valid partial match of the root of the tree decomposition that does not map any vertex as \emph{unmatched} can be attributed to one or more subgraph isomorphisms. We construct these subgraph isomorphisms \emph{top down} while traversing the shortcut graph of valid partial matches in reverse order (only following edges that lead to a valid partial match). The algorithm keeps a set of current subgraph isomorphisms at every vertex in the graph and does a parallel BFS of limited depth. When visiting a new vertex of the shortcut graph (which contains a partial mapping $\phi$), every subgraph isomorphism in the list from the predecessor node is extended by $\phi$ and stored in the new vertex.

 As for the decision problem, we observe that only $k$ edges introduce a new vertex to the mapping. The other edges are shortcut so that overall at most $O(\log n)$ edges need to be traversed in between those $k$ edges. However, we now need to construct the possible subgraph isomorphism even through those shortcuts explicitly. Fortunately, as illustrated in \Cref{fig:trivial-match}, there is a unique way to extend a partial match through these shortcut edges, namely, do not change the current mapping at all. Hence, the overall depth of the reconstruction is $O(k \log ^2 n)$. 
 
By considering only occurrences that contain at least one vertex that is closest to the root of the BFS tree of the $k$-$d$ cover, every traversed path leads to at least one subgraph isomorphism, and the work is bounded by the size of all the subgraph isomorphisms. 

\subsection{Bounded Genus \& Apex-Minor-Free Graphs}\label{sec:apex-free}

Our results generalize to all (minor-closed) families of graphs where a bounded diameter graph has bounded treewidth. 
Observe that our treewidth $k$-cover algorithm from \Cref{sec:low-treewidth-cover} does not use anything specific to planar graphs. It outputs subgraphs of diameter $d$ that cover all occurrences of the pattern with constant probability. Moreover, our algorithm for bounded treewidth in \Cref{sec:bounded-treewidth-algo} only requires a treewidth decomposition of low width.
We start by giving the characterization of the graphs where our results hold and then discuss the few necessary changes.

\subsubsection{Locally Bounded Treewidth}

A family of graphs has \emph{locally bounded treewidth}~\cite{DBLP:journals/algorithmica/Eppstein00} if every graph of diameter $D$ has treewidth at most $f(D)$, for some function $f$. 
Surprisingly, all minor-closed families of graphs that have locally bounded treewidth have \emph{locally linear treewidth}~\cite{DBLP:conf/soda/DemaineH04}, meaning that a graph of diameter $D$ has treewidth $O(D)$.

The graphs of locally bounded treewidth have been characterized with respect to having certain \emph{excluded minors}. 
A graph $G$ that has a vertex $v$ that is connected to all other vertices in $G$ that becomes planar after removing $v$ is an \emph{apex-graph}. Such graphs do not have locally bounded treewidth. For example, consider the $n \times n$ grid with an additional vertex connected to all other vertices. This graph has diameter $2$, but because the grid has treewidth $n$~\cite{DBLP:journals/jct/RobertsonS86} this apex graph has treewidth at least $n$. Note that some apex graphs are planar (like the clique $K_4$) while others are not (like the clique $K_5$).

Interestingly, a minor-closed family of graphs of locally bounded treewidth must have an \emph{apex graph as an excluded minor}~\cite{DBLP:journals/algorithmica/Eppstein00} . For example, planar graphs exclude the apex graph $K_5$ as a minor (by Kuratowski's theorem~\cite{wilson1996introduction}). 
Examples of apex-minor-free graphs include bounded-genus-graphs.

\subsubsection{Parallel Tree Decomposition}

The missing piece to our parallel subgraph isomorphism algorithm on apex-minor-free graphs is a parallel tree decomposition algorithm. The algorithm from Lagergren~\cite{DBLP:journals/jal/Lagergren96} achieves poly-logarithmic depth for constant treewidth, but the depth of the algorithm is not polynomial in $\tau$. It becomes the bottleneck in our subgraph isomorphism algorithm.
\begin{theorem}[Lagergren~\cite{DBLP:journals/jal/Lagergren96}]
For a graph with treewidth $\tau$, computing a tree decomposition of width $8\tau+7$ takes $\tau^{O(\tau)}m$ work and $\tau^{O(\tau)}\log^3 n$ depth.
\end{theorem}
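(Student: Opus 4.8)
The plan is to prove the bound by a recursive balanced-separator decomposition that maintains a bounded \emph{boundary set} at each subproblem, so that the accumulated separators never inflate the bags beyond $O(\tau)$. The starting point is the classical fact that a graph of treewidth $\tau$ admits, for any nonnegative weighting of its vertices, a \emph{balanced separator} of size at most $\tau+1$: taking any width-$\tau$ tree decomposition and selecting a centroid bag (a node whose removal splits the weight of the decomposition tree roughly in half) yields such a separator directly. I would first establish this existence lemma, since it both guarantees that good separators are available at every level of the recursion and caps their size at $\tau+1$.

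Next I would set up the recursion on pairs $(G', W)$, where $G'$ is an induced subgraph and $W$ is a boundary set with $|W| = O(\tau)$ whose vertices must all appear together in the bags eventually produced for $G'$. At each step the algorithm finds a separator $S$ with $|S| \le \tau+1$ that simultaneously balances the boundary $W$ and the internal vertices of $G'$; it then emits a decomposition node containing $W \cup S$ and recurses on each side with the new boundary formed from the relevant part of $W \cup S$. The crucial invariant is that balancing $W$ keeps each child's boundary to a constant fraction plus $O(\tau)$, so the boundaries---and hence the bags---remain of size $O(\tau)$; the explicit constant $8\tau+7$ is exactly the bookkeeping of how much the boundary can grow across one split before the balance forces it back down. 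Because every split halves (up to a constant factor) the weight of the larger of the two sides, the recursion terminates after $O(\log n)$ levels.

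The main obstacle is the parallel separator-finding step: producing, in low depth and with only $\tau^{O(\tau)}$ work per subproblem, a separator of size $\tau+1$ that achieves the required balance. An exact minimum balanced separator is expensive, so I would instead exploit the treewidth promise to restrict the search drastically---reducing $G'$ to a bounded-size representative structure on which a balanced separator can be located by enumerating candidate bags compatible with the local structure. It is this enumeration over $O(\tau)$-sized candidate sets that produces the $\tau^{O(\tau)}$ factor, and settling for an $O(\tau)$-approximate separator (rather than the optimal one) is what relaxes the output width from $\tau+1$ to $8\tau+7$. This step is where both the unavoidable $\tau^{O(\tau)}$ blow-up and the extra depth originate, the latter from nesting the parallel separator search, which itself costs $\mathrm{polylog}$ depth, inside the $O(\log n)$ recursion levels.

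Finally I would account for the totals. Summing the per-subproblem work $\tau^{O(\tau)}\cdot|G'|$ over the recursion tree telescopes to $\tau^{O(\tau)} m$, since each vertex and edge participates in only $O(\log n)$ subproblems and the geometric shrinking of the larger side keeps the sum linear; the depth is the product of the $O(\log n)$ recursion levels with the $\mathrm{polylog}$-depth, $\tau^{O(\tau)}$-work separator routine, giving $\tau^{O(\tau)}\log^3 n$ depth as claimed. Stitching the emitted nodes $W\cup S$ along the recursion tree yields a valid tree decomposition, completing the argument.
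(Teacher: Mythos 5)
A preliminary remark: this theorem is not proved in the paper at all; it is imported as a black-box citation of Lagergren's result (the surrounding text only discusses its consequences), so the only question is whether your sketch would itself constitute a proof. It would not, because the step carrying all of the difficulty is assumed rather than established. Your outline is the classical Robertson--Seymour top-down scheme: recurse on pairs $(G',W)$, find a small separator balancing the boundary $W$, emit the bag $W\cup S$, recurse. That scheme is standard \emph{sequentially}; the parallel content of the theorem lies entirely in the primitive you gloss over, namely finding, within $\tau^{O(\tau)}|G'|$ work and polylogarithmic depth, a balanced separator of size $O(\tau)$. Your proposed implementation --- ``reducing $G'$ to a bounded-size representative structure on which a balanced separator can be located by enumerating candidate bags'' --- is circular: there is no known way to compute such a bounded-size summary of an arbitrary $n$-vertex graph of treewidth $\tau$ in low depth and near-linear work without already possessing (something like) a tree decomposition, which is precisely the object being constructed. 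The standard sequential realization of this step (enumerate the $2^{|W|}$ bipartitions of $W$ and compute a minimum separator for each via $O(\tau)$ rounds of augmenting paths) does not parallelize work-efficiently, since each augmentation round is a directed reachability computation, for which no polylogarithmic-depth, near-linear-work algorithm is known. This is exactly why Lagergren's actual proof goes through substantially different and more involved machinery rather than the naive top-down recursion; a proof attempt that treats the separator subroutine as available is missing the theorem's core.

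There is also an internal inconsistency in your resource accounting. You argue that every vertex and edge participates in $O(\log n)$ subproblems; that yields $\tau^{O(\tau)} m \log n$ work, not the claimed $\tau^{O(\tau)} m$. The ``geometric shrinking'' you invoke applies to individual subproblems, not to the total size of a recursion level: the subproblems at any fixed level jointly cover essentially the whole graph, so each of the $O(\log n)$ levels costs $\Theta(m)$ (times $\tau^{O(\tau)}$), and the sum is $\Theta(m\log n)$. Likewise, the $\tau^{O(\tau)}$ factor in the depth bound never actually arises in your argument: $O(\log n)$ recursion levels times a polylogarithmic-depth separator routine would give $\mathrm{poly}(\tau)\log^{O(1)}n$ depth, not $\tau^{O(\tau)}\log^3 n$. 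Both discrepancies are symptoms of the same issue: the stated bounds are being matched to the theorem rather than derived from the construction.
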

Together with the results from \Cref{sec:low-treewidth-cover} and \Cref{sec:bounded-treewidth-algo} this proves the generalized bounds. Similar results hold for disconnected patterns and listing all occurrences of the pattern.
\begin{theorem}
Deciding (with high probability) if a connected pattern graph $H$ occurs as a subgraph of an apex-minor-free graph $G$ takes $k^{O(k)} n \log^3 n$ work and $k^{O(k)} \log ^3 n$ depth.
\end{theorem}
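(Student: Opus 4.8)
The plan is to observe that essentially the entire machinery of the previous sections is oblivious to planarity, and that planarity was invoked in exactly one place: the construction of a low-width tree decomposition of each cover piece. So I would first argue that the Parallel Treewidth $k$-$d$ Cover of \Cref{sec:low-treewidth-cover} applies verbatim to any graph. The clustering of \Cref{lemma:exp-start}, the spanning-tree argument of \Cref{obs:clustering}, and the BFS-based layering make no use of the embedding, so they still produce a family of induced subgraphs $G_i$ of diameter $d \le k-1$, of total size $O(nd)$, such that every fixed occurrence lies inside some $G_i$ with probability at least $1/2$. The only planar-specific ingredient in the original analysis is the bound ``treewidth at most $3d$'' on each $G_i$; here I would replace it by the locally-linear-treewidth property. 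Since $G$ is apex-minor-free, it is minor-closed with locally bounded treewidth, so by \cite{DBLP:conf/soda/DemaineH04} each $G_i$ has treewidth $O(d) = O(k)$.

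Next I would supply the tree decomposition consumed by the bounded-treewidth algorithm of \Cref{sec:bounded-treewidth-algo}. In the planar case this came for free from the embedding; here I would instead run Lagergren's parallel algorithm on each $G_i$. With $\tau = O(k)$ it produces a tree decomposition of width $8\tau+7 = O(k)$ in $k^{O(k)}|G_i|$ work and $k^{O(k)}\log^3 n$ depth. Feeding this width-$O(k)$ decomposition into the bounded-treewidth subgraph isomorphism lemma (the substitution $\tau \mapsto O(k)$) then decides the existence of an occurrence of $H$ inside $G_i$ in $(\,O(k)+3\,)^{3k+1}|G_i| = k^{O(k)}|G_i|$ work and $O(k\log^2 n)$ depth.

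I would then aggregate over the cover. Running the decomposition and the isomorphism test on all the $G_i$ in parallel, the depth is the maximum over pieces, dominated by Lagergren's $k^{O(k)}\log^3 n$; the work sums as $k^{O(k)}\sum_i |G_i| = k^{O(k)}\,O(nd) = k^{O(k)} n$, using $d \le k$. A single run thus finds a fixed occurrence with probability at least $1/2$, so $O(\log n)$ independent runs (executed in parallel, multiplying the work but not the depth) certify non-existence with high probability, yielding the claimed bounds; the stated $\log^3 n$ factors conservatively absorb Lagergren's depth and the repetition overhead. The disconnected-pattern reduction of \Cref{sec:disconnected} and the listing procedure of \Cref{sec:listing-si} carry over unchanged, which gives the accompanying ``similar results'' remark.

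The main obstacle is not conceptual but quantitative, and it is the tree decomposition step. Unlike planar graphs, apex-minor-free graphs admit no known parallel tree decomposition whose dependence on $\tau$ is polynomial, so Lagergren's $\tau^{O(\tau)}$ factor together with its $\log^3 n$ depth is strictly worse than the planar route. This is exactly why the depth degrades from $O(k\log^2 n)$ to $k^{O(k)}\log^3 n$, and why this step, rather than the cover or the dynamic program, dominates both the work and the depth in the final statement.
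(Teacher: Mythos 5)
Your proposal is correct and takes essentially the same route as the paper: it likewise observes that the $k$-$d$ cover of \Cref{sec:low-treewidth-cover} never uses planarity, replaces the bound ``treewidth at most $3d$'' by the locally linear treewidth of minor-closed apex-minor-free families~\cite{DBLP:conf/soda/DemaineH04}, computes the tree decompositions with Lagergren's parallel algorithm~\cite{DBLP:journals/jal/Lagergren96}, and feeds them into the bounded-treewidth algorithm of \Cref{sec:bounded-treewidth-algo}, with Lagergren's step identified (as in the paper) as the bottleneck dominating both work and depth. The only slight imprecision is your claim that the pieces $G_i$ themselves have diameter $d$: in general they do not, but each is a subgraph of a minor of $G$ of diameter $O(d)$ (contract the BFS levels below level $i$ into one vertex), and since the family is minor-closed and treewidth is subgraph-monotone, your conclusion that each $G_i$ has treewidth $O(k)$ stands.
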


\section{Planar Vertex Connectivity}

Vertex connectivity is a classic graph problem with applications in networking~\cite{DBLP:conf/soda/Censor-HillelGK14} and operations research~\cite{DBLP:journals/ipl/NagamochiII01}. Sequentially, $c$-vertex connectivity can be solved in linear time for planar graphs~\cite{DBLP:conf/soda/Eppstein95} and, more generally, in $O(c^2 n^2 \log n)$ time deterministically~\cite{DBLP:conf/focs/HenzingerRG96} and $O(m+c^{7/3}n^{4/3})$ time with high probability~\cite{DBLP:conf/stoc/NanongkaiSY19}. Recently, a sub-quadratic time deterministic algorithm~\cite{DBLP:journals/corr/abs-1910-07950} and a near-linear work~\cite{DBLP:journals/corr/abs-1905-05329} algorithm have been announced. 

 Two-connectivity and $3$-connectivity have long been solved (optimally) for general graphs with linear work and logarithmic depth~\cite{DBLP:journals/siamcomp/TarjanV85, DBLP:conf/stoc/MillerR87}. In contrast, no sub-quadratic work poly-logarithmic depth $4$-connectivity algorithm was available even for planar graphs prior to our work. 

We show that vertex connectivity can be solved with $O(n \log n)$ work and $O(\log^2 n)$ depth in planar graphs. This result is possible because the vertex connectivity is closely related to certain separating cycles in a target graph that is constructed based on a planar embedding of the original graph (details below). Moreover, we use that the work of our subgraph isomorphism algorithm is $O(n \log n)$ \emph{for any constant size pattern}. Eppstein~\cite{DBLP:conf/soda/Eppstein95} uses this idea (attributed to Nishizeki) for his \emph{sequential} linear work vertex connectivity algorithm. We describe the approach and the necessary changes to our parallel algorithm.

\subsection{From Connectivity to Separating Cycles}\label{sec:vertex-redux}

We show how to construct the target graph that we use to solve vertex connectivity, based on an idea attributed to Nishizeki~\cite{DBLP:conf/soda/Eppstein95}. See \Cref{fig:vertexcut} for an illustration.

Embed the graph $G$ in the plane. Use the embedding to construct a \emph{bipartite} target graph $G'$ from $G$ as follows. One side of the bipartite graph consists of the vertices from $G$. The vertices on this side are the \emph{original vertices}. The other side has a vertex $f$ for each face $f$ in the original graph $G$. The vertices on this side are the \emph{face vertices}. A face vertex $f$ of $G'$ and an original vertex $v$ of $G'$ are connected if and only if the face $f$ contains the vertex $v$ in the graph $G$. Observe that because the graph $G'$ is bipartite, all its cycles have even length. 

\paragraph{Separating Subgraphs}
A subgraph $H'$ of a graph $G$ \emph{separates the vertex set $S\subseteq V(G)$} if the graph $G[V(G) \backslash V(H')]$ we get from removing all vertices of $H'$ from $G$ contains at least two vertices from $S$ in two different connected components. 

\begin{figure}
	\includegraphics[width=1.0\linewidth]{./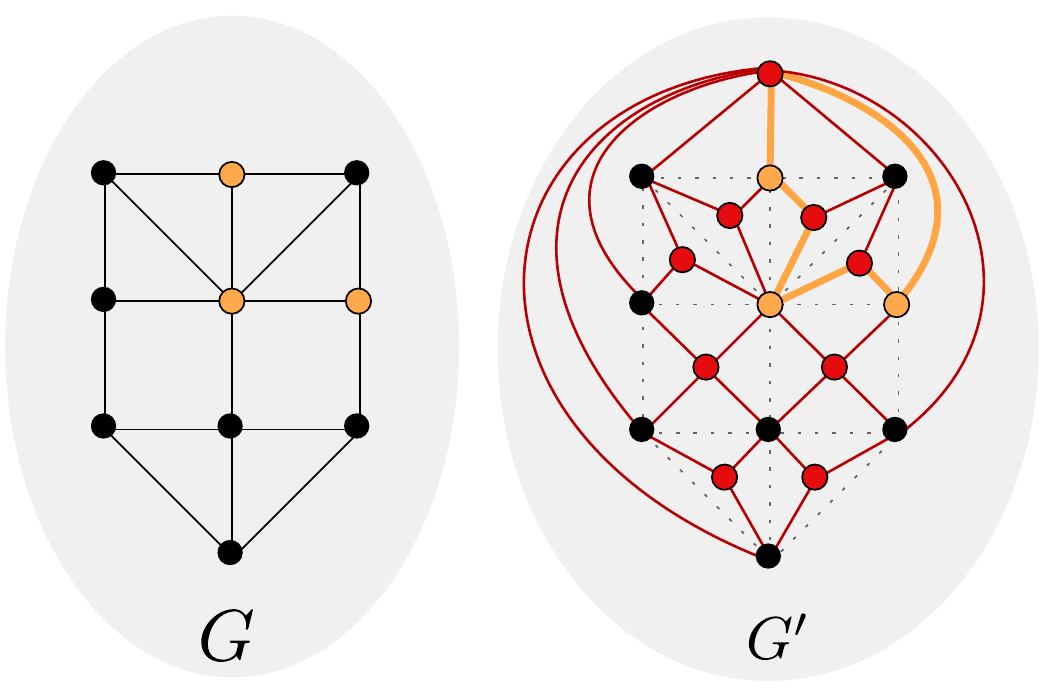}
\caption{
To construct the target graph $G'$ from the embedding of the graph $G$, place a vertex $v$ inside every face $f$ of $G$ and connect this vertex $v$ to all the vertices of the face $f$ (remove the original edges). Since there is a $6$-cycle (highlighted and bold) in $G'$ that separates the original vertices (black), but no smaller such cycle, the graph $G$ is $3$-connected. This cycle contains three original vertices (highlighted) whose removal disconnects the graph $G$.
}
\label{fig:vertexcut}
\end{figure}

\begin{lemma}[Nishizeki / Eppstein~\cite{DBLP:conf/soda/Eppstein95}] \label{lem:sepcycles}
If $G$ is $2$-connected and the shortest cycle in the bipartite graph $G'$ that separates the set of original vertices has length $2c$, then $G$ has vertex connectivity $c$.
\end{lemma}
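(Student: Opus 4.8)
The plan is to prove the sharper statement that the vertex connectivity $\kappa$ of $G$ equals exactly half the length of a shortest separating cycle of $G'$; the claimed implication then follows since $\kappa = c$. First I would record the structure of $G'$: it is the vertex–face incidence (radial) graph of the plane graph $G$, so it is again a plane bipartite graph, each of whose faces is a quadrilateral corresponding to one edge of $G$, and any cycle of length $2c$ is a Jordan curve meeting the embedding alternately in $c$ original vertices $v_1,\dots,v_c$ and $c$ face vertices. The workhorse is a \emph{noncrossing observation}: if $ab$ is an edge of $G$ with $a,b$ off a cycle $C$ of $G'$, then $ab$ does not cross $C$, because every edge of $C$ runs through a face interior (joining a boundary vertex to the face's centre) whereas $ab$ lies on face boundaries, so the only possible intersections are at an endpoint of $ab$ lying on $C$. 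Hence $a$ and $b$ lie on the same side of the Jordan curve $C$, and after deleting $v_1,\dots,v_c$ the remaining original vertices split cleanly into an ``inside'' set and an ``outside'' set with no edge of $G$ between them.

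To show $\kappa \le c$, I would take a shortest separating cycle $C$ of length $2c$ and argue that its $c$ original vertices form a vertex cut. By the previous paragraph, deleting $v_1,\dots,v_c$ leaves no edge joining an inside original vertex to an outside one, so it suffices to check that both sides carry an original vertex. This is exactly where the hypothesis that $C$ \emph{separates the original vertices} enters: given two original vertices in different components of $G'-V(C)$, I would show they are separated in $G$ by $\{v_1,\dots,v_c\}$, for otherwise a path between them in $G$ avoiding the cut could be lifted to a walk in $G'-V(C)$ by replacing each edge $a_j a_{j+1}$ with a detour $a_j,f,a_{j+1}$ through a common face $f$, contradicting the separation. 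To obtain $c \le \kappa$ I would run the construction attributed to Nishizeki in the other direction: starting from a minimum cut $S$ of size $\kappa$ that splits $G$ into parts $A$ and $B$, I would read off the cyclic order in which the cut vertices appear around the $A$–$B$ interface, and between consecutive cut vertices insert a face of $G$ incident to both that lies between $A$ and $B$, yielding a closed walk $v_1,f_1,v_2,\dots,v_\kappa,f_\kappa$ in $G'$ of length $2\kappa$ that separates $A$ from $B$. Combining the two inequalities gives $\kappa = c$.

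The main obstacle I anticipate is making both constructions genuinely produce \emph{simple} cycles of the exact claimed length. In the direction $\kappa \le c$, the lifting of a $G$-path to $G'$ can fail at an edge both of whose bordering faces already lie on $C$; resolving this needs the quadrangulation structure together with a rotation-system argument at the shared face vertices to reroute, or equivalently a direct proof that each open side of $C$ contains an original vertex. In the direction $c \le \kappa$, the delicate point is showing that a \emph{minimal} cut of a $2$-connected plane graph is encountered in a single cyclic order around the $A$–$B$ interface, so that the inserted faces are distinct and the closed walk is a simple $2\kappa$-cycle; this is where $2$-connectivity of $G$ and minimality of $S$ are essential, as they force every cut vertex to touch both sides and the interface to be a single closed curve rather than several.
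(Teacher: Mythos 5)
First, a point of order: the paper does not prove this lemma at all; it imports it from Eppstein~\cite{DBLP:conf/soda/Eppstein95}, who attributes it to Nishizeki. So your proposal must stand on its own. Your overall architecture is the standard (and surely the intended) one: show $\kappa \le c$ using the original vertices of a shortest separating cycle, show $c \le \kappa$ by converting a minimum vertex cut into a separating cycle of length $2\kappa$, and use the fact that edges of $G$ cannot cross edges of $G'$ in the combined embedding. That noncrossing observation is correct and is indeed the right workhorse.

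The genuine gap is in the direction $\kappa \le c$, at exactly the spot you flagged, and it is not a removable technicality: the intermediate claim you rely on --- that two original vertices lying in different components of $G'-V(C)$ must be separated in $G$ by the original vertices of $C$ --- is \emph{false} for separating cycles in general, so no lifting argument that ignores minimality of $C$ can establish it. Concretely, let $G$ be a hexagon $v_1v_2v_3v_4v_5v_6$ plus two adjacent interior vertices $a,b$, with $a$ joined to $v_1,v_5,v_6$ and $b$ joined to $v_2,v_3,v_4$; this $G$ is $3$-connected. Its bounded faces are two quadrilaterals containing the edge $ab$ and four triangles, and the $12$-cycle $C$ in $G'$ alternating between the six hexagon vertices and the six bounded faces is a separating cycle: every bounded face is incident to $a$ or $b$ and lies on $C$, and every hexagon vertex lies on $C$, so $G'-V(C)$ consists of the three isolated vertices $a$, $b$, and the outer-face vertex. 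Yet the six original vertices of $C$ are \emph{not} a cut of $G$: deleting them leaves the connected edge $ab$. In this example both faces incident to the edge $ab$ lie on $C$ (so your lifting has nothing to detour through), and your proposed fallback --- that each open side of $C$ contains an original vertex --- also fails, since the outside of $C$ contains no original vertex at all. The only way to close the argument is to use shortestness of $C$ essentially, e.g.\ by surgery: if an edge $ab$ with $a,b\notin V(C)$ has both incident faces $f,f'$ on $C$, splice the length-two path $f,a,f'$ against each of the two $f$--$f'$ arcs of $C$ to obtain strictly shorter cycles and show that at least one of them still separates, contradicting minimality (in the example above, this surgery recovers the true shortest separating cycles, of length $6$, matching $\kappa=3$). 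Your second direction has the same status as you describe it: the existence of a common face between consecutive cut vertices around the interface and the simplicity of the resulting closed walk are the substance of Nishizeki's theorem rather than side conditions, so that half too is a plan rather than a proof --- though there, at least, the statement being aimed at is true as stated.
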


This leads us to our algorithm to decide planar vertex connectivity in parallel. 
First, check if the graph is $2$-connected and if it is $3$-connected using existing algorithms~\cite{DBLP:journals/siamcomp/TarjanV85, DBLP:conf/stoc/MillerR87}. If the graph is $3$-connected, check if there is a cycle of length $8$ in $G'$ that separates the original vertices of $G'$. If so, the graph $G$ has vertex connectivity $4$. Otherwise, the graph $G$ has vertex connectivity $5$.

\begin{lemma}
Deciding Planar Vertex Connectivity (w.h.p) takes $O(n \log n)$ work and $O(\log ^ 2 n)$ depth.
\end{lemma}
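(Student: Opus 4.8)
The plan is to reduce planar vertex connectivity to a constant number of \emph{separating cycle} searches in an auxiliary planar graph, and then invoke our subgraph isomorphism machinery (with the separating-set variant) on these constant-sized patterns. First I would build the bipartite target graph $G'$ described in \Cref{sec:vertex-redux}. This requires a planar embedding of $G$, which we obtain in $O(n)$ work and $O(\log^2 n)$ depth~\cite{DBLP:conf/focs/KleinR86}; from the embedding, the vertex-face incidences that define $G'$ can be enumerated in $O(n)$ work and $O(\log n)$ depth. Two facts make $G'$ a suitable target for our algorithm: it is planar (placing one vertex inside each face and joining it to the vertices on that face's boundary is a planarity-preserving operation), and it has only $O(n)$ vertices and edges, since by Euler's formula a planar graph has $O(n)$ faces and $O(n)$ incidences. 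I would mark the \emph{original vertices} as the set $S$ that the sought cycle must separate.

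Next, by \Cref{lem:sepcycles}, once $G$ is $2$-connected its vertex connectivity $c$ equals half the length of the shortest cycle of $G'$ that separates $S$. Since a simple planar graph has minimum degree at most $5$, its vertex connectivity is at most $5$, so $2c \le 10$ and only separating cycles of the constantly many even lengths $4, 6, 8, 10$ can be relevant. I would first dispatch the small cases with the existing parallel $2$- and $3$-connectivity algorithms~\cite{DBLP:journals/siamcomp/TarjanV85, DBLP:conf/stoc/MillerR87}, each running in $O(n)$ work and $O(\log n)$ depth, and then distinguish the remaining cases by searching $G'$ for a cycle $C_{2c}$ whose removal separates $S$, using the marked-vertex-separating variant of our subgraph isomorphism algorithm. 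The smallest length $2c$ admitting such a cycle yields the connectivity $c$.

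For the cost, each search is an instance of subgraph isomorphism with a \emph{connected} pattern (a cycle) of constant size $k = 2c = O(1)$ on a planar target of $O(n)$ vertices. By our main result this costs $O((3k)^{3k+1} n \log n) = O(n \log n)$ work and $O(k \log^2 n) = O(\log^2 n)$ depth, and is correct with high probability because the algorithm is Monte Carlo. Only a constant number of such searches (and the constantly many connectivity tests) are performed, and the embedding and construction of $G'$ are dominated by these bounds, so the total is $O(n \log n)$ work and $O(\log^2 n)$ depth, as claimed.

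The main obstacle is making the separating-set condition interact cleanly with the isomorphism search: I must confirm that the variant finding subgraphs that separate a marked set $S$ (promised in our contributions) preserves the $O((3k)^{3k+1} n \log n)$ work and $O(k \log^2 n)$ depth for constant $k$ on $G'$, so that checking the separation does not dominate. The remaining care is bookkeeping -- verifying that $G'$ is genuinely planar with $O(n)$ size, that the cascade of length tests correctly maps the shortest separating cycle length to $c$ via \Cref{lem:sepcycles} (in particular that the $2$-connectivity hypothesis of that lemma is established before the cycle searches), and that the high-probability guarantees compose over the constantly many searches.
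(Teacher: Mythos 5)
Your proposal is correct and follows essentially the same route as the paper: construct the bipartite face-incidence graph $G'$, invoke \Cref{lem:sepcycles}, dispatch the small cases with the existing parallel $2$-/$3$-connectivity algorithms, bound the connectivity by $5$ via Euler's formula, and decide the remaining cases with the $S$-separating subgraph isomorphism algorithm (\Cref{lem:separating-si}) on constant-size cycle patterns in $G'$. The only cosmetic difference is that you cascade over all relevant even cycle lengths whereas the paper performs a single length-$8$ separating-cycle test after the $3$-connectivity check; in both cases this is $O(1)$ searches of $O(n\log n)$ work and $O(\log^2 n)$ depth each, so the bounds agree.
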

\begin{proof}
	The algorithm is correct by Lemma \ref{lem:sepcycles} and the fact that the vertex connectivity of a planar graph is at most $5$. This follows from Euler's formula, which implies that every planar graph has a vertex of degree at most $5$~\cite{wilson1996introduction}. Removing the neighbors of this vertex disconnects the graph, hence the graph is not $6$-connected.

Constructing a planar embedding takes $O(n)$ work and $O(\log^2 n)$ depth~\cite{DBLP:conf/focs/KleinR86}. 
Together with the modifications described in \Cref{sec:separating-si} (Lemma \ref{lem:separating-si}) this implies the work and depth bounds.
\end{proof}

Hence, we need to augment our subgraph isomorphism algorithm so that it can find a \emph{subgraph that separates} a set of vertices (the original vertices in the case of the graph $G'$).

A simple approach to find all \emph{separating} cycles of a given length would be to enumerate \emph{all} cycles of a given length using the algorithm from \Cref{sec:listing-recovering} and check which are separating. However, there can be $\Theta(n^{4})$ many length $8$ cycles in a planar graph~\cite{DBLP:journals/jgt/HakimiS79}, so this would be too much work.

\subsection{Separating Subgraph Isomorphism}\label{sec:separating-si}

We generalize our parallel subgraph isomorphism algorithm so that it can find \emph{subgraphs that separate} a given set of vertices. Two modifications are necessary. These are similar to what was necessary for the sequential algorithm~\cite{DBLP:conf/soda/Eppstein95} for cycles. The first modification is to the parallel treewidth cover algorithm from \Cref{sec:low-treewidth-cover}. This modification ensures that a subgraph that is separating in the original graph is also separating in each of the graphs in the cover. The second modification concerns the algorithm for bounded treewidth subgraph isomorphism from \Cref{sec:bounded-treewidth-algo}. It extends the state space of the recursion to keep track of which vertices are separated by the subgraph and which can be in the same component after removing the subgraph.

\emph{$S$-Separating Subgraph Isomorphism} asks if there exists an occurrence $H'$ of the pattern graph $H$ in the target graph $G$ that separates the vertex set $S\subseteq V(G)$. If the pattern graph is a cycle, the problem is called \emph{$S$-Separating Cycle}.

\subsubsection{How to Modify the $k$-$d$-Cover}

Start by clustering the graph $G$ as usual. Then, for each cluster, merge all neighboring clusters into a single vertex each (do not choose these as the source for the BFS).
Then, in each cluster, instead of returning the graph $G_i$ (which is an induced subgraph of the cluster), merge all connected components of the cluster that result after removing $V(G_i)$ into a single vertex each. This produces a set of minors of the graph (instead of a set of induced subgraphs), as shown in \Cref{fig:cover-sep}.
 
 When proceeding to find an $S$-separating subgraph in these minors, consider each merged vertex that contains at least one vertex of the set $S$ to be in the set $S$. Moreover, do not allow the occurrence of the pattern to contain any of the merged vertices (the other vertices are in a set of allowed vertices $A$).

\begin{figure}\
	\includegraphics[width=1.0\linewidth]{./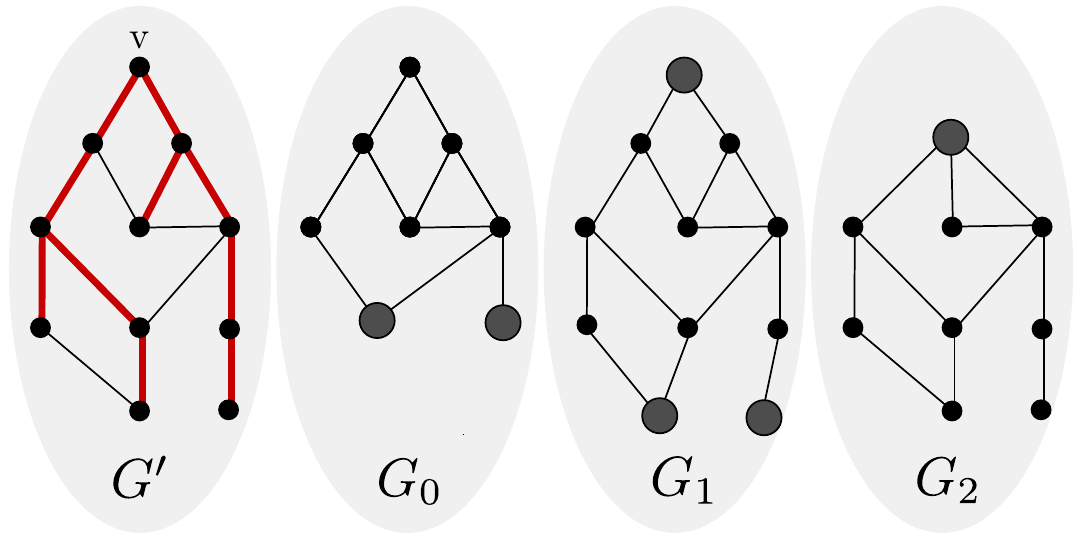}
	\vspace{-2em}
\caption{
In addition to the vertices from the $k$-$d$ cover, some vertices correspond to merged subgraphs (these are drawn larger in the picture).
A subgraph of diameter $d$ (here $d=2$) that is separating in $G'$ is separating in at least one of the minors $G_0$, $G_1$, $G_2$ using only the original (small) vertices.
}
	\vspace{-2em}
\label{fig:cover-sep}
\end{figure}

\subsubsection{How to Modify the Bounded Treewidth Algorithm}
The generalized algorithm must separate $S$ and only contain vertices from the set of allowed vertices $A$. 
To restrict the found occurrences to only contain vertices from the set of allowed vertices, it suffices to restrict the mapping at each tree decomposition node to $A$.

The idea to find an occurrence that separates $S$ is that we record which vertices are separated by the occurrence. Removing such an occurrence creates at least two connected components. We call one of these components the \emph{inside} vertices and the rest of the vertices the \emph{outside vertices}. Observe that after removing a separating occurrence from the graph, every resulting connected component must either consist of only inside or consist of only outside vertices.

We extend the construction of partial matches. A partial match for node $X$ has an additional set $I_X\subseteq X$ of \emph{on the inside} vertices and a set $O_X\subseteq X$ of \emph{on the outside} vertices.
Moreover, it has a boolean $i_x$ to keep track if any of the vertices in $S$ that occur in the subgraph induced by the current tree decomposition node are on the inside (and a boolean $o_x$ to store if any of those vertices are on the outside). This bookkeeping ensures that at least one vertex is on both sides -- otherwise, the subgraph would not be separating. 

We adapt the semantics of the combination rules accordingly to reflect the intuition that partial matches keep track of which vertices are on the inside or outside. Consider a node $X$ of the decomposition tree, one of its children $Y$, and the (extended) partial matches $(\phi_X, C_X, U_X, I_X, O_X, i_x, o_x)$ of $X$ and $(\phi_Y, C_Y, U_Y, I_Y, O_Y, i_y, o_y)$ of $Y$. Then, for the partial matches to be valid, ensure the following:
\begin{itemize}
	\item Every connected component of the subgraph of $G$ induced by the vertices in $X$ that are not mapped onto by the function $\phi_X$ is either fully in $O_X$ or fully in $I_X$. Similarly for $Y$. 
	\item The inside and outside of $X$ and $Y$ have to be consistent: For any vertex $u$, if $u \in X\cap Y$ then $u\in I_X$ if and only if $u\in I_Y$ and $u\in O_Y$ if and only if $u\in O_X$. 
	\item The parent match has to `remember' if any vertex is in $S$ and on the inside or outside. Specifically, for a vertex $u\in S$, $u \in I_X$ implies $i_x$ and $u\in O_X$ implies $o_x$. Moreover, $i_y$ implies $i_x$ and $o_y$ implies $o_x$.
\end{itemize}
Finally, a valid partial match at the root must separate $S$ (which means $i_x$ and $o_x$ are both true at the root).

\begin{lemma}\label{lem:separating-si}
	Deciding \emph{Planar $S$-Separating Subgraph Isomorphism} (w.h.p.) for a connected pattern graph with $k$ vertices takes $O(k\log ^2 n)$ depth and $O(2^{9k} (3k+1)^{3k+1} n \log n)$ work.\end{lemma}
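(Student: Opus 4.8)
The plan is to establish correctness of the two modifications of \Cref{sec:separating-si} and then to bound the work and depth by charging the enlarged partial-match state against the analysis of the unmodified algorithm, reusing everything that does not change. Concretely, I would show (i) that with probability at least $1/2$ a fixed occurrence of $H$ that separates $S$ reappears as an occurrence separating $S$ and using only allowed vertices in one of the produced minors, and (ii) that the extended combination rules certify exactly the $S$-separating occurrences; then I would multiply out the per-bag state count for the work bound and argue that the critical path is unchanged up to a blow-up of the state that is constant in $n$.

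For correctness of the modified cover I would first invoke \Cref{obs:clustering}: with probability at least $1/2$ no edge of a fixed occurrence crosses a cluster of the $2k$-clustering, so the whole occurrence lies in one cluster and, exactly as in the unmodified cover, is an induced occurrence of some level-window subgraph. The purpose of contracting every component of the cluster remaining after deleting the window, and every neighbouring cluster, into a single vertex is that these merged vertices faithfully represent external connectivity and carry $S$-membership whenever they absorb an $S$-vertex; hence two original $S$-vertices lying in distinct components after deleting the occurrence still lie in distinct components of the minor after deleting the occurrence, because contraction adds no connectivity and any connecting path that leaves the window is represented by a merged vertex (see \Cref{fig:cover-sep}). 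Restricting the allowed set $A$ to the original vertices forces the recovered occurrence to avoid the merged vertices, so it is a genuine occurrence. For the bounded-treewidth step I would check that the augmented match $(\phi_X,C_X,U_X,I_X,O_X,i_x,o_x)$ maintains a globally consistent bipartition: the monochromatic-component rule prevents any path from crossing from inside to outside except through the image of $\phi_X$ (the occurrence); the agreement rule on $X\cap Y$ glues the bipartitions of adjacent bags into a bipartition of the whole induced subgraph $G_X$; and flag propagation certifies at the root that some $S$-vertex is inside and some outside, which is the definition of $S$-separation. The minors are planar and, because the contracted pieces only attach to the two boundary levels of the window, still have treewidth at most $3d$ as in Eppstein's separating-cycle construction~\cite{DBLP:conf/soda/Eppstein95}, so $\tau+3\le 3d+3\le 3k$.

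The work bound follows from the growth of the state space. Besides the $(\tau+3)^k$ choices for $(\phi_X,C_X,U_X)$, there are at most $2^{\tau+1}$ assignments of the unmapped vertices of $X$ to inside/outside and $4$ choices for the two flags, so each bag carries at most $2^{\tau+3}(\tau+3)^k$ partial matches. After an $O(\mathrm{poly}(\tau))$ per-bag precomputation of the components of $G[X]$, each triple of one parent and its two children matches is validated in $O(\tau)$ time, so the per-bag work is the cube of the per-bag count times $O(\tau)$, namely $2^{3\tau+9}(\tau+3)^{3k+1}$. Using $\tau\le 3d\le 3(k-1)$ this is $O(2^{9k}(3k+1)^{3k+1})$; summing over the $O(n)$ bags gives $O(2^{9k}(3k+1)^{3k+1}\,n)$ per run, and the $O(\log n)$ independent runs needed to certify absence of an occurrence with high probability give the stated $O(2^{9k}(3k+1)^{3k+1}\,n\log n)$ work.

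For the depth I would argue it stays $O(k\log^2 n)$, and this is the step I expect to be the main obstacle. The structural property behind the base analysis is intact: along any path toward a valid root match at most $k$ edges match a new pattern vertex, and the rest are translation edges. The difficulty is that a translation edge need no longer be unique, because a component of unmapped vertices living entirely in $X\setminus Y$ may be coloured inside or outside freely, so the translation subgraph is a layered directed graph rather than the forest of \Cref{fig:trivial-match}. I would resolve this by noting that the number of augmented states per bag is $2^{O(k)}$, a constant independent of $n$: after the bough decomposition of \Cref{lem:bough-decomposition}, the translation relation between bags $2^{i-1}$ apart can be composed into one of distance $2^{i}$ in $O(k)$ depth (a Boolean product of $2^{O(k)}\times 2^{O(k)}$ relations), and placing exponential-distance shortcuts only at every $(\log n)$-th bag keeps the added work at $2^{O(k)}$ per bag, absorbed into the $2^{9k}$ factor. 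A single segment is then traversed in $O(\log n)$ parallel breadth-first-search rounds, so any reachability path still takes $O(k\log n)$ rounds; together with the planar embedding in $O(\log^2 n)$ depth this yields the overall $O(k\log^2 n)$ depth, exactly as in the unmodified algorithm.
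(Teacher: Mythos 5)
Your proposal follows the same route as the paper. The two modifications you analyze (the contracted cover and the partial matches extended by $I_X, O_X$ and the two flags) are exactly those of \Cref{sec:separating-si}, and your work accounting coincides with the paper's proof, which consists of three observations: contracting the components costs $O(n)$ work and $O(\log n)$ depth via a connected-components algorithm; the per-node state count grows by a factor of at most $2^{3k+3}$; and hence the number of parent--children combinations grows by at most $O(2^{9k})$ per node, everything else being inherited from the base algorithm and the $O(\log n)$ repetitions. Your correctness arguments for the cover (lifting paths through contracted vertices) and for the inside/outside bookkeeping are details the paper leaves implicit, and they are sound.

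Where you genuinely add something is the depth analysis, and your worry is legitimate: the paper's proof is silent on depth, implicitly assuming that the shortcut machinery of \Cref{sec:bounded-treewidth-algo} carries over, but the structural fact that machinery rests on --- that a partial match can be extended to a parent without matching a new vertex in \emph{exactly one} way (\Cref{fig:trivial-match}), so the translation edges form a forest --- fails for the extended states: a connected component of unmapped vertices appearing only in the parent bag may be colored inside or outside freely (and the choice of the valid left-child match can vary the coloring as well), so a child match can have several outgoing translation edges. Your relation-composition fix repairs this. One refinement is needed to reach the stated $O(k\log^2 n)$ depth rather than $O(k\log k \log^2 n)$: the number of extended states per bag is $k^{O(k)}$, not $2^{O(k)}$ as you write, so a Boolean product over the full state space costs $\Theta(\log (k^{O(k)})) = \Theta(k \log k)$ depth per composition on a CREW PRAM. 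Instead, observe that the $(\phi, C, U)$ part of the state still translates uniquely and hence still follows the forest of \Cref{fig:trivial-match}; the relations you must compose therefore act only on the $2^{O(k)}$ inside/outside annotations along each forest path, so each composition has depth $O(k)$ and the claimed depth follows. The extra work is $2^{O(k)}$ per bag and is absorbed into the $2^{9k}$ factor, as you say.
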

\begin{proof}
Computing connected components and contracting the edges takes $O(n)$ work and $O(\log n)$ depth~\cite{DBLP:conf/ipps/Gazit91}. The number of states for the recursion increases by at most $2^{3k+3}$. Hence, the number of considered combinations with the children increases by at most $O(2^{9k})$ at every node.
\end{proof}

When $k$ is a constant, the algorithm takes $O(n \log n)$ work and $O(\log ^2 n)$ depth. In \Cref{sec:vertex-redux}, the only missing piece to solve planar vertex connectivity in $O(n \log n)$ work and $O(\log ^2 n)$ depth is to find $S$-Separating $8$-cycles, which we have just described how to solve in the stated bounds. 

\section{Conclusion and Future Work}

We presented a randomized algorithm to decide planar subgraph isomorphism in $O(n \log n)$ work and $O(\log ^2 n)$ depth for constant size patterns. We used this result for deciding planar vertex connectivity in the same parallel bounds.

There are many interesting avenues for future work. 
Although we could use our subgraph listing algorithm to \emph{count} the number of occurrences, this is not work-efficient as the runtime grows with the number of occurrences. The difficulty comes from the randomized way in which we cluster the graph to construct a $k$-$d$ cover. A deterministic parallel $k$-$d$ cover would solve this issue and yield a deterministic algorithm overall.

Reducing the work dependency on the size of the pattern $k$ could be an essential step in improving the practicality of the approach. There are indications that $2^{\Omega(k/\log k)}$ is a lower bound for the dependency on $k$ for any planar subgraph isomorphism algorithm with polynomial dependency in $n$~\cite{DBLP:conf/focs/FominLMPPS16}, but there remains room for improvement regarding the exponential dependency on $k$. Moreover, faster parallel algorithms for tree decomposition would directly improve our bounds for apex-minor-free graphs.

For planar vertex connectivity, we reduced the gap between the work of our algorithm and the best sequential algorithm to $O(\log n)$. It is natural to ask if it is possible to solve planar vertex connectivity in $O(n)$ work and poly-logarithmic depth. More generally, in light of the recently announced sequential near-linear time vertex connectivity algorithm for sparse graphs~\cite{DBLP:journals/corr/abs-1905-05329}, it might be interesting to see if we can solve vertex connectivity in sparse graphs in near-linear work and low depth.

\clearpage

\appendix

\section{Decomposing a Tree into Paths}\label{sec:bough-decomp}

We prove \Cref{lem:bough-decomposition} using expression tree evaluation techniques. This means that we transform the problem into a problem of evaluating an expression tree of suitable operations. To evaluate this expression tree efficiently, we need to decompose the operations into unary functions satisfying certain properties, as described below.

Recall that the Lemma requires the tree to be split into $O(\log n)$ layers each consisting of disjoint paths. The idea is to compute for each vertex in the tree the layer in which the vertex occurs. This computes for each node a \emph{layer number}, where the layer number of the leafs is zero and the layer number of nodes closer to the root is monotonically increasing (as detailed below).

Each layer (i.e. subgraph induced by vertices with the same layer number) consists of a forest where each connected component is a path. Hence, it is easy to find and order these paths (using list ranking) once we have the layer numbers.

Next, we describe the recursive function $L$ that computes the layer numbers.
In a general rooted tree, the parent $b$ has the same layer number $l(b)$ as the maximum layer number of any of its children $a_1, \dotsc a_k$ if this maximum is unique (i.e., only one child has this layer number). Otherwise, the layer number of the parent is one larger than that maximum. In summary, the layer number $l(b)$ of node $b$ with children $a_1, \dotsc a_k$ with layer numbers  $l_1, \dotsc l_k$ is given recursively:
\begin{align*}
L (l_1, \dotsc l_k) =
\begin{cases}
	{\max (l_1, \dotsc l_k) } & \hfill \text{if the maximum is unique ;}\\
	\max (l_1, \dotsc l_k)+1 & \hfill \text{otherwise .}
\end{cases}
\end{align*}
The layer number of a leaf is $0$. This recursive description works because the case where the maximum is unique corresponds to when the parent is part of the same path as the child that obtains this maximum. If two children have the same layer number, the parent must start its own path and a new layer.

 Moreover, observe that it becomes clear why there are $O(\log n)$ layers: For a parent to have a larger layer number than one of its children, there need to be at least two children of the same maximal layer number. This means that the number of nodes in a layer decreases by at least a factor $2$ when going to a higher layer.
 
 We proceed to describe the conditions for applying the efficient tree contraction based expression tree evaluation techniques, as summarized in \Cref{lem:expression}.
A family of unary functions is \emph{closed under composition} if the composition of any two functions in the family is also in the family.
A family of unary functions $\mathcal{F}$ over the domain $\mathcal{D}$ is \emph{closed under projection} with respect to a $k$-ary function $h:\mathcal{D}^k \rightarrow \mathcal{D}$ if for all tuples $a_1, \dotsc, a_{k-1} \in \mathcal{D}^{k-1}$ and all indexes $i$ (between $1$ and $k$) the function $h(a_1, \dotsc, a_{i-1}, x, a_{i+i} \dotsc, a_{k-1}): \mathcal{D} \rightarrow \mathcal{D}$ (a unary function of $x$) is in the family $\mathcal{F}$.

\begin{lemma}\label{lem:expression}
	If there is a family of $O(1)$-computable functions that is closed under composition and closed under projection with respect to all the operations in an expression tree of $n$ nodes, then evaluating the expression tree takes $O(n)$ work and $O(\log n)$ depth~\cite{Reif:1993:SPA:562546}.
\end{lemma}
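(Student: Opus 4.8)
The plan is to prove the lemma by \emph{parallel tree contraction} using the rake-and-compress (shunt) technique of Miller and Reif, maintaining a \emph{partial evaluation} of the expression as a unary function on each edge of the tree. First I would reduce to the case of an expression tree of bounded degree: since every operation is $O(1)$-computable it has bounded arity $k=O(1)$, and I would augment the function family $\mathcal{F}$ with the identity function. This augmented family is still $O(1)$-computable, remains closed under composition (composing anything with the identity changes nothing), and is still closed under projection with respect to every operation $h$ (projection only constrains the $h$'s). Throughout the contraction every tree edge $(v,\mathrm{parent}(v))$ will carry a unary function $g_v$ from this augmented family, initialized to the identity, with the invariant that the value eventually delivered to $\mathrm{parent}(v)$ along this edge equals $g_v$ applied to the value computed in the subtree currently hanging below $v$.

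Next I would define the two local operations and verify that they preserve membership in $\mathcal{F}$. When a leaf $\ell$ with already-known constant value $c=g_\ell(\mathrm{val}(\ell))$ is \emph{raked}, its parent $p$ (carrying operation $h_p$ and edge function $g_p$) loses one argument: substituting $c$ into that argument of $h_p$ yields a unary function of the single remaining argument, which lies in $\mathcal{F}$ precisely because $\mathcal{F}$ is \emph{closed under projection} with respect to $h_p$. When we then \emph{shunt} by splicing out the degree-two node $p$ and reconnecting the surviving child $s$ directly to $p$'s parent, the new edge function is the composition of $g_p$, this projected unary function, and $g_s$; this composite is again in $\mathcal{F}$ because $\mathcal{F}$ is \emph{closed under composition}. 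For a node of arity larger than two the same substitution is applied to each raked argument in turn, reducing its arity by the number of raked children while keeping every partial operation within $\mathcal{F}$.

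I would then schedule these operations in the standard Miller--Reif fashion: number the leaves left to right, and in each round rake the odd-indexed leaves together with a shunt at their parents, alternating the tie-breaking so that no two simultaneous operations touch the same node (using an Euler-tour / list-ranking preprocessing to compute the leaf order and parent pointers). Each round removes a constant fraction of the remaining nodes, so after $O(\log n)$ rounds the tree collapses to a single node whose stored value is $\mathrm{val}(\mathrm{root})$, the value of the whole expression. Each active node performs $O(1)$ function evaluations and compositions per round, and since each node is active a total of $O(1)$ times across all rounds, the total work is $O(n)$ and the depth is $O(\log n)$, matching the list-ranking and tree-contraction primitives cited.

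The main obstacle I expect is not the algebra of the closure properties---that part is exactly what the two hypotheses are tailored to supply---but rather the scheduling and accounting that guarantee a constant-factor shrinkage each round while ensuring that concurrently applied rake/shunt pairs never write to the same node or edge. This is where the Euler-tour leaf numbering and the alternating even/odd tie-break are essential, and where the bounded-arity reduction must be handled carefully so that repeatedly raking several children of one high-arity node in a single round does not create write conflicts; isolating one rake per node per round (and charging the arity reduction across $O(1)$ rounds) resolves this.
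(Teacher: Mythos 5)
Your proposal is correct and takes essentially the approach the paper relies on: the paper does not prove this lemma itself but defers to the cited reference, and your Miller--Reif rake/shunt contraction with edge-carried unary functions---closure under projection absorbing raked leaves, closure under composition collapsing chains---is exactly the standard argument behind that citation, matching the paper's one-line intuition about representing partially evaluated subtrees by members of the family. One small repair: a partially substituted operation with two or more remaining arguments need not lie in $\mathcal{F}$ (the family contains only \emph{unary} functions, and closure under projection is only stated for fixing all but one argument), so for arity above two you should store the raked constants explicitly at the node and invoke closure under projection only once a single unevaluated argument remains; this changes neither the $O(n)$ work nor the $O(\log n)$ depth accounting.
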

The intuition is that the expression tree evaluation repeatedly contracts the expression tree. For this procedure to be well-defined, the algorithm needs to express partially evaluated subtrees using these unary functions.
Next, we exhibit such a suitable family of unary functions for the function $L$ that maps the layer number of the children to the layer number of the parent. 
%

We define a set of unary functions over the domain of natural numbers, where for each natural number $i$, there are two functions: a function $f_i^{\neq}(x)$ and a function $g_i^{=}(x)$. Intuitively, the functions $f_i^{\neq}(x)$ record a state where the maximum (so far) is unique and equal to $i$. The functions $g_i^{=}(x)$ record the state where the maximum is not unique and equal to $i$. Formally, we set:
\begin{align*}
f_i^{\neq}(x) &=
\begin{cases}
	i+1 & \hfill \text{if $i=x$ ,}\\
	\max(i, x) & \hfill \text{otherwise .}
\end{cases}\\
g_i^{=}(x) &=
\begin{cases}
i+1 & \hfill \text{if $i \geq x$ ,}\\
x & \hfill \text{if $i<x$ .}
\end{cases}
\end{align*}

\noindent
We check that the function class is closed under composition. For any natural numbers $i$ and $j$, the following holds:
\begin{align*}
g_j^{=}( \ f_i^{\neq}(x) \ ) = f_i^{\neq}( \ g_j^{=}(x) \ ) &=
\begin{cases}
g_i^{=}(x) & \hfill \text{if $i=j$ ,}\\
f_i^{\neq}(x) & \hfill \text{if $i>j$ ,}\\
g_j^{=}(x) & \hfill \text{if $j>i$ .}
\end{cases}\\
f_i^{\neq}( \ f_j^{\neq}(x) \ ) &=
\begin{cases}
g_i^{=}(x) & \hfill \text{if $i=j$ ,}\\
f_{\max(i, j)}^{\neq}(x) & \hfill \text{otherwise .}\\
\end{cases}\\
g_i^{=}( \ g_j^{=}(x) \ ) &= \ \
g_{\max(i, j)}^{=}(x) & \hfill \text{}\\
\end{align*}

To check that the function class is closed under projection with respect to $L$, consider a sequence of layer values $\mathcal{L}=l_1, \dotsc, l_{k-1}$. Let $l_{\max}$ be the maximum of $\mathcal{L}$. For any valid index $i$ we have that:
\begin{align*}
L (l_1, \dotsc, l_{i-1}, x, l_{i+1}, \dotsc, l_{k-1}) = 
\begin{cases}
f_{l_{\max}}^{\neq}(x) & \hfill \text{if $l_{\max}$ is unique in $\mathcal{L}$,}\\
g_{l_{\max}}^{=}(x) & \hfill \text{otherwise.}
\end{cases}
\end{align*}

\balance

\bibliographystyle{plain}
\bibliography{isomorphs,isomorph-apps}

\end{document}